\definecolor{mygray}{gray}{.9}
\newcommand{\ket}[1]{| #1 \rangle} 
\newcommand{\bra}[1]{\langle #1 |} 
\newcommand{\bbC}{\mathbb{C}}
\newcommand{\Bf}{{\boldsymbol{f}}}
\newcommand{\Bg}{{\boldsymbol{g}}}
\newcommand{\Bj}{{\boldsymbol{j}}}
\newcommand{\Bn}{{\boldsymbol{n}}}
\newcommand{\Br}{{\boldsymbol{r}}}
\newcommand{\Bu}{{\boldsymbol{u}}}
\newcommand{\Bv}{{\boldsymbol{v}}}
\newcommand{\Bw}{{\boldsymbol{w}}}
\newcommand{\BB}{{\boldsymbol{B}}}
\newcommand{\BD}{{\boldsymbol{D}}}
\newcommand{\BE}{{\boldsymbol{E}}}
\newcommand{\BF}{{\boldsymbol{F}}}
\newcommand{\BI}{{\boldsymbol{I}}}
\newcommand{\BJ}{{\boldsymbol{J}}}
\newcommand{\BM}{{\boldsymbol{M}}}
\newcommand{\Cf}{\mathcal{F}}
\newcommand{\Ch}{\mathcal{H}}
\newcommand{\Ci}{\mathcal{I}}
\newcommand{\Cj}{\mathcal{J}}
\newcommand{\Cm}{\mathcal{M}}
\newcommand{\Cn}{\mathcal{N}}
\newcommand{\Cu}{\mathcal{U}}
\newcommand{\Cv}{\mathcal{V}}
\def \onebf{\textbf{1}}
\def \zerobf{\textbf{0}}
\def \CT{\text{CNOT}}
\def \RX{\text{RX}}
\def \CRZ{\text{CRZ}}
\def \CRX{\text{CRX}}
\def \CRY{\text{CRY}}
\def \szz{\sigma_{00}}
\def \soo{\sigma_{11}}
\def \szo{\sigma_{01}}
\def \soz{\sigma_{10}}
\def \d {\mathrm{d}}
\def \D {\mathrm{d}}
\newcommand{\bbR}{\mathbb{R}}
\newcommand{\deltabf}{\boldsymbol{\delta}}
\newtheorem{theorem}{Theorem}[section]
\newtheorem{lemma}{Lemma}[section]
\newtheorem{assumption}{\bf Assumption}[section]
\newtheorem{remark}{\bf Remark}[section]
\begin{document}

	\title{Schr\"odingerization based Quantum Circuits for  Maxwell's Equation with time-dependent source terms}
	\author[1,2]{Chuwen Ma   \footnote{Corresponding author.} \thanks{chuwenii@sjtu.edu.cn}}
 \author[1,2,3]{Shi Jin   \thanks{shijin-m@sjtu.edu.cn}}
	\author[2,3,4]{Nana Liu   \thanks{nana.liu@quantumlah.org}}
	
	\author[1,2]{Kezhen Wang  \thanks{carson-w@sjtu.edu.cn}  }
	\author[1,2,3]{Lei  Zhang    \thanks{lzhang2012@sjtu.edu.cn}}
	\affil[1]{School of Mathematical Sciences,   Shanghai Jiao Tong University, Shanghai 200240, China.}
	\affil[2]{Institute of Natural Sciences, Shanghai Jiao Tong University, Shanghai 200240, China.}
	\affil[3]{Ministry of Education, Key Laboratory in Scientific and Engineering Computing, Shanghai Jiao Tong University, Shanghai 200240, China.}
	\affil[4]{University of Michigan-Shanghai Jiao Tong University Joint Institute, Shanghai 200240, China.}  

	\maketitle
	\begin{abstract}
	The Schr\"odingerisation method combined with the autonomozation technique  in \cite{cjL23} converts general non-autonomous linear  differential equations with non-unitary dynamics into systems of autonomous Schr\"odinger-type equations, via the so-called warped phase transformation that maps the equation into two higher dimension. 
	Despite the success of Schr\"odingerisation techniques, they typically require 
	the black box of the sparse Hamiltonian simulation, suitable for continuous-variable based  analog quantum simulation. For  qubit-based general quantum computing one needs to design the quantum circuits for practical implementation.

	This paper explicitly constructs a quantum circuit for Maxwell's equations with perfect electric conductor (PEC) boundary conditions and time-dependent source terms, based on Schr\"odingerization and autonomozation, with corresponding computational complexity analysis.
	Through initial value smoothing and high-order approximation to the delta function, the increase in qubits from the extra dimensions only requires minor rise in computational complexity, almost $\log\log {1/\varepsilon}$ where $\varepsilon$ is the desired precision.
	Our analysis demonstrates that quantum algorithms constructed using Schr\"odingerisation exhibit polynomial acceleration in computational complexity compared to the classical Finite Difference Time Domain (FDTD) format.
		
	\end{abstract}
	\section{Introduction}
    Maxwell's equations are fundamental for understanding electromagnetic phenomena across a broad spectrum of scales, from subatomic particles to galactic structures. The profound impact of electromagnetic field theory on both science and technology is evident, with applications that permeate nearly every aspect of modern life, particularly in fields such as electrical engineering, optics, wireless optical communication, and remote sensing.
    In the context of modern wireless communication, there is an exponential surge in demand for efficient signal processing capabilities. Traditional algorithms are increasingly inadequate to meet these growing requirements.
    Despite remarkable progress in addressing extensive physical systems through supercomputers \cite{PMBSK19,ZDZB19}, obtaining solutions within a feasible computation time is still intractable.

    Quantum technology has achieved significant progresses, notably demonstrating quantum advantage over classical computers\cite{Divi95,Shor94,Stea98,NA07,SLF24,ZWDCPL20,AABB19}.
    Among many potential applications, one particularly promising area is the utilization of quantum computers  to solve the time-dependent Schr\"odinger equations, which follow unitary evolutions and hence the wave functions can be coherently represented on quantum computers. 
    A variety of efficient quantum algorithms have been developed in Hamiltonian simulations \cite{BCC15,BCC17,LC17,Low2019qubitization,CGJ19,Berry2019Dyson,AnLin2021TimeHamiltonian,
    	An2022timedependent,fang2023time}. To this aim, it is convenient to 
    	rewrite the source-free Maxwell formulation into a Hamiltonian system based on the Riemann-Silberstein vectors for quantum simulation \cite{costa19,suau21,vaha2020,vaha2020b,va20,va202,Bui2022}.

   When source terms or complex boundary conditions arise in electromagnetic field systems, the time evolution of the system often ceases to be unitary. Therefore, it is essential to transform these systems into unitary dynamical systems\cite{JLY22a,JLY22b, ALL,ACL23,cjL23}.   
   Among the unitarisation approaches, the Schr\"odingerisation method introduced in \cite{JLY22a,JLY22b} provides a simple and generic framework for quantum simulation of all linear partial differential equations (PDEs) and ordinary differential equations (ODEs). 
   Its essence lies in employing a warped phase transformation that elevates the equations to one higher dimension. In the Fourier space, this transformation reveals a system of Schr\"odinger-type equations. 
   The approach has been expanded to address a wide array of problems, including open quantum systems in bounded domains with non-unitary artificial boundary conditions \cite{JLLY23ABC}, problems entailing physical boundary or interface conditions \cite{JLLY22BDID}, Maxwell's equations \cite{JLM23}, the Fokker-Planck equation \cite{JLLY24}, ill-posed scenarios such as the backward heat equation \cite{JLM242}, linear dynamical systems with inhomogeneous terms \cite{JLM24},  iterative linear algebra solvers \cite{JL22}, etc.
   Despite the significant success of Schr\"odingerisation,  the explicit design of the corresponding quantum circuits for most problems remains to be addressed. 
   Recently, quantum circuits based on  Schr\"odingerisation have been designed for  heat equations, advection equations \cite{HZ24} and  heat equations with boundary conditions \cite{JLY24circuit}.

 This paper details the explicit construction  of  quantum circuits derived from the quantum algorithms through the process of Schr\"odingerisation, in conjunction with the aotonomousization  techniques in \cite{cjL23}. The main contributions of this paper are summarized as follows:
 \begin{itemize}
 	\item 
 We propose a quantum algorithm for Maxwell's equations under Perfect Electric Conductor (PEC) boundary conditions and time-varying source terms. This is accomplished by employing a stretching transformation to derive a homogeneous ordinary differential equation (ODE) system \cite{JLM24}, followed by the process of Schr\"odingerisation. Additionally, we apply the autonomousization methodology in \cite{cjL23} to construct a time-independent (autonomous) Hamiltonian system in one higher dimension, characterized by a Hamiltonian operator with spatially varying coefficients.
 \item 
 We construct the quantum circuits corresponding to the proposed quantum algorithms.
 \item 
 We apply smooth extension and high-order approximations to the delta function, demonstrating that the increase in dimensionality resulting from Schr\"odingerisation and the technique of turning a time-dependent non-autonomous system to a time-independent autonomous only increases mildly  the  computational complexity  on quantum computations, 
 almost $\log\log {1/\varepsilon}$ with $\varepsilon$ the desired precision. We further show that transforming a source-driven ODE system into a homogeneous one by introducing auxiliary variables via a stretching transformation does not degrade the success probability of obtaining the target state (see Remark~\ref{eq:probalility of uf}),  
 thus maintaining the computational efficiency of the quantum algorithm. 
 \end{itemize}

	The rest of the paper is organized as follows. 
	In section 2, we present the matrix representation of Maxwell's equations with physical boundary conditions and time-varying source terms.
	In section 3, we give a brief review of the Schr\"odingerisation approach for general inhomogeneous linear ODEs induced from the discretization of Maxwell's equations, and turn the non-autonomous system into an autonomous one.
    In section 4, we present the detailed implementation of quantum circuits for the time-independent Hamiltonian system.
    In section 5, we show the computational cost of the explicit quantum circuit and demonstrate that the complexity of the quantum algorithm exhibits polynomial acceleration compared to classical algorithms. 
    In section 6,  we conduct numerical experiments to validate the feasibility of the proposed algorithm, specifically focusing on the accuracy of the recovery through Schr\"odingerisation and high-order convergence rates.

	Throughout the paper, we restrict the simulation to a finite time interval $t\in [0,T]$, and we use a 0-based indexing, i.e. $j= \{0,1,\cdots,N-1\}$, or $j\in [N]$, and  $\ket{j}\in \bbC^N$, to denote a vector with the $j$-th component being $1$ and others $0$.
	We shall denote the identity matrix and null matrix by $I$ and $\textbf{0}$, respectively,
	and the dimensions of these matrices should be clear from the context. Otherwise,
	The notation $I_N$ stands for the $N$-dimensional identity matrix, and $\textbf{1}$ denotes $2$-dimension identity matrix.
	\section{The matrix representation of Maxwell's equations}
	In this section, we consider Maxwell's equations for a linear homogeneous medium with constant permittivity and permeability which are set to be  $1$, in the presence of sources of charge $\rho$ and currents $\BJ$ as
	\begin{subequations}\label{eq:maxwell Eq EB}
		\begin{align}
			\frac{\partial }{\partial t} \BE -\nabla \times \BB &= -\BJ,\quad
			\frac{\partial }{\partial t} \BB + \nabla \times \BE = \textbf{0},\\
			\nabla \cdot \BB &=0, \quad \quad \quad 
			\nabla \cdot \BE = \rho,
   \label{eq:maxwell Gauss law, maxwell Thomson}
		\end{align}
	\end{subequations}  
	in the three dimensional domain $\Omega=[0,1]^3$.
	From Equation~\eqref{eq:maxwell Eq EB}, the Maxwell-Gauss equation (or Gauss's law) and the Maxwell-Thomson equation~\eqref{eq:maxwell Gauss law, maxwell Thomson} are 
	actually consequences of the other equations and charge conservation equation
	\begin{equation}
		\frac{\partial \rho}{\partial t}  = \nabla \cdot \BJ.
	\end{equation} 
     To simplify the notation, let
	\begin{align*}
		\Cf &= \bigg(
		E_x \ket{0}
		+ E_y \ket{1}
		+ E_z \ket{2}
		+ r^a \ket{3}
		+ B_x \ket{4}
		+ B_y \ket{5}
		+B_z \ket{6}
		+r^b \ket{7}
		\bigg),\\
		\Cj &= \bigg(
		J_x \ket{0}+J_y \ket{1}+J_z\ket{2}
		-\rho\ket{7}
		\bigg).
	\end{align*}
	Write Equation~\eqref{eq:maxwell Eq EB} in vector form as
	\begin{align}    \label{eq:maxwell matrix 1} 
		&\frac{\partial \mathcal{F}}{\partial t} = \Cm \Cf -\Cj= \begin{bmatrix}
			\textbf{0} &\Cm\\
			\Cm^{\dagger} &\textbf{0}
		\end{bmatrix}  \Cf-\Cj. \quad  
			\Cm= \begin{bmatrix}
			0 &-\partial_z &\partial_y &-\partial_x\\
			\partial_z &0 &-\partial_x &-\partial_y\\
			-\partial_y &\partial_x &0 &-\partial_z\\
			\partial_x &\partial_y &\partial_z &0
		\end{bmatrix}. 
	\end{align}
	By comparing \eqref{eq:maxwell Eq EB} and \eqref{eq:maxwell matrix 1}, it is evident that the auxiliary variable $r^a\equiv r^b \equiv 0$. We consider the perfect electric conductor (PEC) boundary condition, which takes a specific form because a perfect conductor supports surface charges and currents, preventing fields from penetrating the body \cite{ACL18}, i.e., 
	\begin{equation}\label{eq:EB perfect conductor}
		\Bn \times \BE = \textbf{0},\qquad  \Bn\cdot \BB=0, \quad \text{on}\; \partial \Omega,
	\end{equation}
	where $\Bn$ is the unit normal to the boundary $\partial \Omega$.
	
 \section{Schr\"odingerization for Maxwell's equations}
 
	In this section, we provide a brief review of the Schr\"odingerization \cite{JLY22a,JLY22b} of Maxwell's equations using Yee's scheme \cite{JLM23}. We choose a uniform spatial mesh size $\triangle x = \triangle y = \triangle z = M^{-1}$ with $M$ an even positive integer given by $M = 2^m$.
 
	\subsection{Notations of finite difference operator}
    Before presenting the discretization, we introduce some notation. We define the shift operators as follows:
	\begin{equation*}
		S^{+}\ket{j} = \ket{j+1},\quad 
		S^{-}\ket{j} = \ket{j-1},\quad 
		1\leq j\leq M-1.
	\end{equation*}
	It is straightforward to verify that the matrices can be expressed as
	\begin{align}
		S^{-} &= \sum_{j=1}^{2^m-1} \ket{j-1}\bra{j} = \sum_{j=1}^m
		\onebf^{\otimes(m-j)}\otimes \sigma_{01}\otimes \sigma_{10}^{\otimes(j-1)}
		\triangleq \sum_{j=1}^m s_j^{-}, \label{eq:sn} \\
		S^{+}&= \sum_{j=1}^{2^m-1} \ket{j}\bra{j-1} = \sum_{j=1}^m
		\onebf^{\otimes(m-j)}\otimes \sigma_{10}\otimes \sigma_{01}^{\otimes(j-1)}
		\triangleq \sum_{j=1}^m s_j^{+}, 
		\label{eq:sp} \\
		I^{r}&=\sum_{j=1}^{2^{m}-1}\ket{j}\bra{j}= \onebf^{\otimes m} - \sigma_{00}^{\otimes m}.
		\label{eq:Ir}
	\end{align}
	Here $\sigma_{ij} := \ket{i}\bra{j}$, $i,j=0,1$ is a $2\times 2$ matrix.
	These two shift operators satisfy $S^{+} = (S^{-})^{\dagger}$.
	For a single qubit, the Pauli matrices are 
	\begin{equation*}
		X = \begin{bmatrix}
			0 &1 \\
			1 &0
		\end{bmatrix} = \szo+\soz, \quad 
		Y = \begin{bmatrix}
			0 &-i \\
			i &0 
		\end{bmatrix} = -i\szo+i\soz, \quad 
		Z = \begin{bmatrix}
			1 &0 \\
			0 &-1
		\end{bmatrix}=\szz-\soo.
	\end{equation*}
	
	\subsection{Discretization of Maxwell's equations}
	According to  Yee's lattice configuration \cite{Yee66,TafYee00}, the different components of the electromagnetic field
	and of the current densities are calculated at different cell center (half integer index) and cell vertices (integer index):
	\begin{align*}
		\BE_{\Bj}(t) &=(E_{x,\Bj},E_{y,\Bj},E_{z,\Bj})=
		(E_{x,j_1+\frac 12,j_2,j_3},\;
		E_{y,j_1,j_2+\frac 12,j_3},\;
		E_{z,j_1,j_2,j_3+\frac 12}),\\
		\BB_{\Bj}(t) &=(B_{x,\Bj},B_{y,\Bj},B_{z,\Bj})
		=(B_{x,j_1,j_2+\frac 12,j_3+\frac 12},\;
		B_{y,j_1+\frac 12,j_2,j_3+\frac 12},\;
		B_{z,j_1+\frac 12,j_2+\frac 12,j_3}).
	\end{align*}
	Correspondingly, the current densities, charge densities and the auxiliary variables are located according to  Yee's lattice configuration:
	\begin{equation*}
		\begin{aligned}
			\BJ_{\Bj}(t) &=(J_{x,\Bj},J_{y,\Bj},J_{z,\Bj})
			=(J_{x,j_1+\frac 12,j_2,j_3},\;
			J_{y,j_1,j_2+\frac 12,j_3},\;
			J_{z,j_1,j_2,j_3+\frac 12}),\\
			\rho_{\Bj}(t) &= \rho_{j_1,j_2,j_3},\quad 
			r_{\Bj}^a =r^a_{j_1,j_2,j_3},\quad 
			r_{\Bj}^b = r^b_{j_1+\frac 12,j_2+\frac 12,j_3+\frac 12}.
		\end{aligned}
	\end{equation*}
	Following Yee's algorithm, one gets the semi-discrete system:
	\begin{align}
		\frac{\D \BE_{h}}{\D t} 
		- \nabla_h \times \BB_{h} &= -\BJ_{h},   \quad \frac{ \D  r_h^a}{\D t} = \nabla_h \bm{\cdot} \BB_h, \label{eq:semi-Eh} \\
		\frac{\D \BB_{h}}{\D t}  + \nabla_h \times \BE_{h}  &= \textbf{0},\quad 
		\hspace{6mm} \frac{\D r_h^b}{\D t}  = \nabla_h \bm{\cdot} \BE_h +\rho_h,\label{eq:semi-Bh}
	\end{align}
	where $\BE_h$, $\BB_{h}$, $\BJ_h$, $r^a_h$ and $r^b_h$ are the collections of $\BE_{\Bj}$, $\BB_{\Bj}$, $\BJ_{\Bj}$, $r^a_{\Bj}$, and $r^b_{\Bj}$ for $0\leq j_1,j_2,j_3<M$,   
	\begin{equation}\label{eq:EhBhJhrh}
		\begin{aligned}
			\BE_h = \begin{bmatrix}
				\sum_{\Bj} E_{x,\Bj}\ket{\Bj}\\
				\sum_{\Bj} E_{y,\Bj}\ket{\Bj}\\
				\sum_{\Bj} E_{z,\Bj}\ket{\Bj}
			\end{bmatrix}, \quad
			\BB_h = \begin{bmatrix}
				\sum_{\Bj} B_{x,\Bj}\ket{\Bj}\\
				\sum_{\Bj} B_{y,\Bj}\ket{\Bj}\\
				\sum_{\Bj} B_{z,\Bj}\ket{\Bj}
			\end{bmatrix}, \quad
			\BJ_h = \begin{bmatrix}
				\sum_{\Bj} J_{x,\Bj}\ket{\Bj}\\
				\sum_{\Bj} J_{y,\Bj}\ket{\Bj}\\
				\sum_{\Bj} J_{z,\Bj}\ket{\Bj} 
			\end{bmatrix}, \quad 
			r_h^{\alpha} = \sum_{\Bj} r_{\Bj}^{\alpha}\ket{\Bj}, \;\alpha=a,b. 
		\end{aligned}
	\end{equation}
	The perfect electric conductor (PEC) boundary condition is given by
	\begin{equation*}
		\begin{aligned}
			&E_{x,j_1+\frac 12,j_2,j_3}\big|_{j_2=0\;\text{or}\;j_3=0} \equiv0,
			\quad 
			&&E_{y,j_1,j_2+\frac 12,j_3}\big|_{j_1=0\;\text{or}\;j_3=0} \equiv0,
			\quad 
			&&E_{z,j_1,j_2,j_3+\frac 12}\big|_{j_1=0\;
				\text{or}\;j_2=0}\equiv0,\\
			&B_{x,j_1,j_2+\frac 12,j_3+\frac 12}\big|_{j_1=0}\equiv0,
			\quad 
			&&B_{y,j_1+\frac 12,j_2,j_3+\frac 12}\big|_{j_2=0}\equiv0,
			\quad 
			&&B_{z,j_1+\frac 12,j_2+\frac 12,j_3}\big|_{j_3=0}\equiv0,
		\end{aligned}
	\end{equation*}
	for $0\leq j_1,\,j_2,\,j_3<M$ and $0\leq t\leq T$. To satisfy the boundary condition, we set
	\begin{equation*}
		\begin{aligned}
			J_{x,j_1+\frac 12,j_2,j_3}\big|_{j_2=0\;\text{or}\;j_3=0} &\equiv0,
			\quad 
			J_{y,j_1,j_2+\frac 12,j_3}\big|_{j_1=0\;\text{or}\;j_3=0} &&\equiv0,
			\quad 
			J_{z,j_1,j_2,j_3+\frac 12}\big|_{j_1=0\;
				\text{or}\;j_2=0}&&&\equiv0.
		\end{aligned}
	\end{equation*}  
	The discrete curl operator $\nabla_h \times$ is defined using central differences, which are also applied in the divergence operators. We define the difference matrices as follows
	\begin{align}
		\BD^{-}&=\sum_{i=0}^{M-2}\ket{i}\bra{i+1} -\sum_{i=1}^{M-1}\ket{i}\bra{i} = S^{-} - \BI^{r}, \label{eq:def FM}\\
		\BD^{+}&=\sum_{i=1}^{M-1}\ket{i}\bra{i} -\sum_{i=1}^{M-1} \ket{i}\bra{i-1} =\BI^{r}-S^{+}.\label{eq:def MF} 
	\end{align}
	It is evident that 	$\BD^{-} = -(\BD^{+})^{\dagger}$. Using \eqref{eq:def FM} and \eqref{eq:def MF}, we define the following matrices
	\begin{align}
		\BD_{x}^{+} = \frac{\BI_{M}\otimes \BI_{M}\otimes \BD^{+}}{\triangle x},\quad 
		\BD_{y}^{+} = \frac{\BI_{M}\otimes \BD^{+}\otimes\BI_{M}}{\triangle y},\quad 
		\BD_{z}^{+} = \frac{\BD^{+}\otimes \BI_{M}\otimes \BI_{M}}{\triangle z},\\
		\BD_{x}^{-} = \frac{\BI_{M}\otimes \BI_{M}\otimes \BD^{-}}{\triangle x},\quad 
		\BD_{y}^{-} = \frac{\BI_{M}\otimes \BD^{-}\otimes\BI_{M}}{\triangle y},\quad 
		\BD_{z}^{-} = \frac{\BD^{-}\otimes \BI_{M}\otimes \BI_{M}}{\triangle z}.
	\end{align}
	Next, we define the discrete curl operator
	\begin{equation*}
		\begin{aligned}
			M_{\nabla_h\times}^E&=
			\begin{bmatrix}
				\textbf{0}   &-\BD_z^{+}        &\BD_y^{+}        &-\BD_x^{+} \\
				\BD_z^{+}  		 &\textbf{0}    &-\BD_x^{+}       &-\BD_y^{+} \\
				-\BD_y^{+}   	 &\BD_x^{+}		&\textbf{0}   &-\BD_z^{+} \\
				\BD_x^{+}         &\BD_y^{+}        &\BD_z^{+}        &\textbf{0}
			\end{bmatrix},\quad 
			M_{\nabla_h \times}^B &=\begin{bmatrix}
				\zerobf       &\BD_z^{-} & -\BD_y^{-} & \BD_x^{-}\\
				-\BD_z^{-}  &\zerobf       &\BD_x^{-}  &\BD_y^{-}\\
				\BD_y^{-} & -\BD_x^{-} &\zerobf       &\BD_z^{-}\\
				-\BD_x^{-}  &-\BD_y^{-}  &-\BD_z^{-}  &\zerobf
			\end{bmatrix}.
		\end{aligned}
	\end{equation*}
	It follows that $M_{\nabla_h\times}^B =-(M_{\nabla_h\times}^E)^{\dagger}$. The matrix representation of \eqref{eq:semi-Eh}-\eqref{eq:semi-Bh} can be rewritten as an $n=8M^3$ dimensional ODE system:
	\begin{align}\label{eq:matrix FDTD}
		\frac{\D }{\D t} \Bu &= A \Bu +\Bf ,\quad
		A = 
		\begin{bmatrix}
			\textbf{0} & \BM_{\nabla_h \times}^E \\
			\BM_{\nabla_h \times}^B  &\textbf{0}
		\end{bmatrix},
	\end{align}
	where 
	$\Bu = \ket{0} \otimes\BE_h + \ket{1}\otimes r_h^a + \ket{2}\otimes\BB_h +\ket{3}\otimes r_h^b$ and 
    $\Bf = \ket{0}\otimes \Bf_1 + \ket{1}\otimes \Bf_2$
    with $\Bf_1 = \begin{bmatrix}
        -\BJ_h\\ \textbf{0}
    \end{bmatrix}$, $\Bf_2 = \sum_{j=0}^2 \ket{j}\otimes \textbf{0}+\ket{3}\otimes \rho_h$,  and $\textbf{0}\in \bbR^{M^3}$ is a zero vector.
 
	\subsection{A review of general framework of Schr\"odingerisation}
	It is time to consider the Schr\"odingerization of the linear system with source terms introduced in \cite{JLM24}, 
	\begin{equation} \label{eq:inhomo linear Eq}
		\frac{\D }{\D t} \begin{bmatrix}
			\Bu \\
			\Br
		\end{bmatrix}=
		\begin{bmatrix}
			A &\BF\\
			\textbf{0}^{\top} &\textbf{0}
		\end{bmatrix}\begin{bmatrix}
			\Bu \\
			\Br
		\end{bmatrix}
		= \tilde A \begin{bmatrix}
			\Bu \\
			\Br
		\end{bmatrix}
		,\qquad
		\begin{bmatrix}
			\Bu(0)\\ \Br(0)
		\end{bmatrix} = \begin{bmatrix}
			\Bu_0\\ \Br_0
		\end{bmatrix},
	\end{equation}
	where $\Br = [c_0,c_0,\cdots,c_0]^{\top}$, $c_0 =\max\{ \max_{t\in [0,T]}\|\Bf\|_{l^{\infty}},1\}$ is a constant, and  $\BF=\sigma_{00}\otimes \BF_1 + \sigma_{11}\otimes \BF_2 $ with $\BF_1 = \text{diag}\{\Bf_1/c_0\}$
	and $\BF_2 = \text{diag}\{\Bf_2/c_0\}$.
	Since any matrix can be decomposed into a Hermitian matric and an anti-Hermitian matric, Equation \eqref{eq:inhomo linear Eq} can be expressed as
	\begin{equation}\label{eq:ODE1}
		\frac{\D}{\D t} \Bu_f = (H_1+iH_2) \Bu_f, 
		\quad 
		H_1 = \frac{1}{2}
		(\tilde A+\tilde A^{\dagger}),\quad 
		H_2 = \frac{1}{2i}
		(\tilde A-\tilde A^{\dagger}),
	\end{equation}
	where $\Bu_f = \ket{0}\Bu + \ket{1}\Br$ 
	with $H_1$ and $H_2$ defined by 
	\begin{equation}
		H_1 = \frac 12 \begin{bmatrix}
			\textbf{0} & \textbf{0} &\BF_1 &\textbf{0} \\
			\textbf{0} & \textbf{0} & \textbf{0} &\BF_2 \\
			\BF_1 &\textbf{0}  & \textbf{0} & \textbf{0} \\
			\textbf{0} &\BF_2 & \textbf{0} & \textbf{0}
		\end{bmatrix},\quad 
		H_2 = \frac{1}{2i} \begin{bmatrix}
			\textbf{0} & 2M_{\nabla_h \times}^E &\BF_1 &\textbf{0} \\
			2M_{\nabla_h \times}^B  &\textbf{0} & \textbf{0} &\BF_2\\
			-\BF_1&\textbf{0}   &\textbf{0} & \textbf{0} \\
			\textbf{0} &-\BF_2 &\textbf{0} & \textbf{0}
		\end{bmatrix},
	\end{equation}
	both of which are Hermitian.  
	Thus, the sparsity of $H_1$ equals to the sparsity of $\tilde A$.
	Using the warped phase transformation $\Bw(t,p) = e^{-p}\Bu_f$ for $p>0$ and extending the initial data to $p<0$, 
	Equation \eqref{eq:ODE1} is converted to a system of linear convection equations\cite{JLM24}:
	\begin{equation}\label{eq:up}
		\begin{aligned}
			\frac{\D}{\D t} \Bw &= -H_1 \partial_p \Bw + iH_2\Bw, \\
			\Bw(0) &= g(p)\Bu_f(0),\quad 
		\end{aligned}
	\end{equation}
	with the smooth initial data $g\in C^{r}(\bbR)$ for any integer $r\geq 2$ \cite{JLM242}. 
	
	Next, we employ the spectral method to discretize the $p$-domain. We select a sufficiently large domain $p\in [-\pi L,\pi L]$, $L>0$ such that  
    \begin{equation}\label{eq:scope of L}
        e^{-\pi L + |\lambda_{\max}(H_1)|T} \approx 0,
    \end{equation}
    where $\lambda_{\max}(H_1)=\max_{t\in (0,T)}|\lambda(H_1(t)))|$.
    This ensures that the wave initially supported within the domain remains so throughout the computation. 
    We set the uniform mesh size  $\triangle p = 2\pi L/N_p$ where $N_p=2^{n_p}$ is a positive even integer and grid points denoted by $-\pi L=p_0<\cdots<p_{N_p}=\pi L$. Define the vector $\Bw_h$ as the collection of the function $\Bw$ at these grid points by
	\begin{equation}
		\Bw_h = \sum_{k\in [N_p]}	\sum_{j\in [n]} \Bw_j(t,p_k) \ket{j}\ket{k},
	\end{equation}
	where $\Bw_j$ is the $j$-th component of $\Bw$.
	The one-dimensional basis functions for the Fourier spectral method are typically chosen as
	\begin{equation} \label{eq:phi nu}
		\phi_l^p(x) = e^{i\nu_l^p (x+\pi L)},\quad \nu_l^p =  (l-N_p/2)/L,\quad 0\leq l\leq N_p-1. 
	\end{equation}
	Using \eqref{eq:phi nu}, we define 
	\begin{equation}\label{eq:Dp}
		\Phi^p = (\phi_{jl}^p)_{N_p\times N_p} = (\phi_l^p(p_j))_{N_p\times N_p},  \quad
		D_{p} = \text{diag}\{\nu_0^p,\cdots,\nu_{N_p-1}^p\}.
	\end{equation}
	Considering the Fourier spectral discretization on $p$, one obtains
	\begin{equation}
		\frac{\D}{\D t} \Bw_h = -i(H_1\otimes P_p) \Bw_h + i(H_2\otimes I_{N_p})\Bw_h \triangleq -iH_p \Bw_h.
	\end{equation} 
	Here $P_p$ is the matrix representation of the momentum operator $-i\partial_p$  and is 
	defined by $P_p = \Phi^p D_{p} (\Phi^p)^{\dagger}$.

	\subsubsection{Turning a non-autonomous system into an autonomous one}
 
	In this subsection, we apply the autonomousization technique from \cite{cjL23} to transform the time-dependent system into a time-independent one. The approach involves introducing a new variable $s$, which reformulates the problem as a new system defined in one higher dimension, featuring {\it time-independent} coefficients, as demonstrated in the following theorem.
	
	\begin{theorem}
		For the non-autonomous system in Equation \eqref{eq:schro of tilde w}, introduce the following initial-value problem of an autonomous PDE
			\begin{equation}\label{eq:v(s)}
			\frac{\D}{\D t} \Bv = -\frac{\partial}{\partial s}\Bv -i H_p(s) \Bv,\quad 
			\Bv(0) = \delta(s) \Bw_h(0),\quad s\in \bbR.
		\end{equation}
	One can  recover $\Bw_h(t)$ from $\Bv(t,s)$ using 
	\begin{equation} \label{eq:tilde whd}
	\Bw_h(t) = \int_{-\infty}^{\infty} \Bv(t,s)\;\d s.
	\end{equation}
	\end{theorem}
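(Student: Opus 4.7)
The plan is to solve the autonomous transport PDE \eqref{eq:v(s)} explicitly by the method of characteristics. The first-order operator on the right-hand side, $-\partial_s$, combined with $\partial_t$ on the left, has characteristic curves $s - t = \text{const}$. I introduce the change of variables $\xi = s - t$, $\tau = t$, and set $f(\xi, \tau) := \Bv(\tau, \xi + \tau)$. A direct calculation shows
\begin{equation*}
\partial_\tau f(\xi, \tau) = \bigl(\partial_t \Bv + \partial_s \Bv\bigr)\bigl|_{(\tau, \xi+\tau)} = -i H_p(\xi + \tau)\, f(\xi, \tau),
\end{equation*}
with initial datum $f(\xi, 0) = \Bv(0, \xi) = \delta(\xi)\, \Bw_h(0)$. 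Thus, along each characteristic labelled by $\xi$, the PDE reduces to a non-autonomous linear ODE in $\tau$ whose Hamiltonian is $H_p(\xi + \tau)$.

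For each $\xi$, let $U_\xi(\tau)$ denote the propagator of $\partial_\tau g = -i H_p(\xi+\tau) g$ with $U_\xi(0) = I$. Then $f(\xi, \tau) = U_\xi(\tau)\, f(\xi, 0) = \delta(\xi)\, U_0(\tau)\, \Bw_h(0)$, where the second equality is because $\delta(\xi)$ concentrates everything at $\xi = 0$. The key observation is that $U_0(\tau)$ satisfies precisely $\partial_\tau U_0 = -i H_p(\tau) U_0$, which is the very evolution that defines $\Bw_h$; hence $U_0(t)\Bw_h(0) = \Bw_h(t)$. Undoing the change of variables gives
\begin{equation*}
\Bv(t, s) = f(s-t, t) = \delta(s - t)\, \Bw_h(t),
\end{equation*}
and integration in $s$, using $\int_{-\infty}^{\infty} \delta(s-t)\,\mathrm{d}s = 1$, yields exactly \eqref{eq:tilde whd}.

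The main obstacle is the rigorous handling of the delta initial datum, since $\Bv(0,\cdot)$ is a distribution rather than an $L^2$ function. I would address this by regularisation: replace $\delta(s)$ with a smooth approximate identity $\delta_\varepsilon(s)$ supported in a shrinking neighbourhood of the origin, run the characteristic argument above to obtain $\Bv_\varepsilon(t,s) = \delta_\varepsilon(s-t)\,\Bw_h(t)$ by standard well-posedness of the smooth transport equation and continuous dependence on $\xi$ of the non-autonomous propagator, and note that $\int \Bv_\varepsilon(t,s)\,\mathrm{d}s = \Bw_h(t)$ holds for every $\varepsilon > 0$. Passing $\varepsilon \to 0$ in the sense of distributions then delivers the recovery formula. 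Alternatively, the whole argument can be carried out directly in $\mathcal{D}'(\mathbb{R})$ following the autonomousisation framework of \cite{cjL23}, in which case no approximation step is needed.
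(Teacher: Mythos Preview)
Your approach via the method of characteristics is correct and is the standard argument behind the autonomousisation technique; the paper itself does not supply a proof of this theorem but simply states it and cites \cite{cjL23}, so there is nothing in the paper to compare against directly. Your direct distributional computation $\Bv(t,s)=\delta(s-t)\,\Bw_h(t)$, using $g(\xi)\delta(\xi)=g(0)\delta(\xi)$ applied to the smooth family $\xi\mapsto U_\xi(t)$, is already a complete proof.

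One minor slip in the regularisation paragraph: with a smooth mollifier $\delta_\varepsilon$ in place of $\delta$, the characteristic solution is
\[
\Bv_\varepsilon(t,s) \;=\; U_{s-t}(t)\,\delta_\varepsilon(s-t)\,\Bw_h(0),
\]
not $\delta_\varepsilon(s-t)\,\Bw_h(t)$ as you wrote, because the propagator $U_\xi(\cdot)$ genuinely depends on the characteristic label $\xi=s-t$ and there is no longer a delta to collapse it to $\xi=0$. Consequently the identity $\int \Bv_\varepsilon(t,s)\,\mathrm{d}s=\Bw_h(t)$ does \emph{not} hold for each fixed $\varepsilon>0$; rather
\[
\int_{-\infty}^{\infty}\Bv_\varepsilon(t,s)\,\mathrm{d}s \;=\; \int_{-\infty}^{\infty} U_\xi(t)\,\delta_\varepsilon(\xi)\,\mathrm{d}\xi\,\Bw_h(0)\;\longrightarrow\; U_0(t)\,\Bw_h(0)=\Bw_h(t)
\]
as $\varepsilon\to 0$, by continuity of $\xi\mapsto U_\xi(t)$. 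This is an easy fix and does not affect the conclusion.
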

	For the discretization in the $s$ domain, we first truncate the infinite $s$-domain to $[-\pi S, \pi S]$, ensuring that $S$ is sufficiently large so that $e^{-\pi S +T} \approx 0$. This allows us to approximate the problem using periodic boundary conditions. Additionally, we set  $H_p(s)=0$ for $s\leq 0$.
	We define the collection of the function $\Bv$, denoted by $\Bv_h$, at the points $-\pi S = s_0<\cdots <s_{N_s} = \pi S$, where $s_i = -\pi S + (i-1)\triangle s$ for $i\in [N_s]$. Here, $N_s = 2^{n_s}$ represents the number of grid points in the $s$ domain, and $\triangle s = \frac{2\pi S}{N_s}$ is the mesh size.
	Analogous to equations \eqref{eq:phi nu}-\eqref{eq:Dp}, we define
	\begin{equation*}
		\Phi^s = (\phi_{jl}^s)_{N_s\times N_s} = (e^{i\nu_l^s(s_j+\pi S)})_{N_s\times N_s},\quad 
	D_s = \text{diag}\{\nu_0^s,\cdots, \nu_{N_s-1}^s\},\quad \nu_l^s = \frac{l-\frac{N_s}{2}}{S},
	\end{equation*}
	and then 
	\begin{equation*}
		P_s =\Phi^s D_s (\Phi^s)^{-1},\quad 	
       H_S = \sum_{l=0}^{N_s-1} \big(H_1(s_l)\otimes \ket{l}\bra{l}\otimes P_p - H_2(s_l)\otimes \ket{l}\bra{l}\otimes I_{N_p}\big).
	\end{equation*}
	The numerical scheme for the discretization of $\Bv$ is given by 
	\begin{equation}\label{eq:Hamiltonian v}
		\begin{aligned}
			\frac{\D}{\D t} \Bv_h = -i( I_{n}\otimes P_s \otimes I_{N_p} + H_S )\Bv_h, \quad \quad 
			\Bv_h(0)  = \Bu_f(0) \otimes \deltabf_h(0) \otimes \Bg_h,
		\end{aligned}
	\end{equation}
   where $\Bg_h = [\Bg(p_0)\;\cdots\; \Bg(p_{N_p-1})]^{\top}$ and $\deltabf_h=[\delta_{h}(s_0),\cdots,\delta_h(s_{N_s-1})]^{\top}$.
   The compactly supported discrete Dirac delta function is defined as  
	\begin{equation*}
		\delta_{h}(s) = \frac{1}{\triangle s} \beta\left(\frac{s}{\triangle s}\right),
	\end{equation*}
	 where $\beta(\xi)$ is the unscaled approximation function supported on the interval $[-q,q]$ for some constant $q$. 
	 We recover the approximation to $ \Bw_h$,  denoted by $ \Bw_h^D$  such that 
	 \begin{equation}\label{eq:whd}
	 	 \Bw_h^D = \triangle s\sum_{l\in [N_s]} (I_{n} \otimes \bra{l} \otimes I_{N_p})\cdot \Bv_h.
	 \end{equation}
	 
	Define
	\begin{equation*}
		H = I_{n} \otimes P_s\otimes I_{N_p} +
		\sum_{l\in [N_s]} \big( H_1(s_l)\otimes \ket{l}\bra{l}\otimes D_p - H_2(s_l)\otimes \ket{l}\bra{l}\otimes I_{N_p}\big).
	\end{equation*}
	By changing variables to $\tilde{\Bv}_h=[I_n \otimes I_{N_s} \otimes (\Phi^p)^{\dagger}]\Bv_h$,  one gets
	\begin{equation}\label{eq:schro of tilde w}
		\begin{aligned}
			\frac{\D}{\D t} \tilde{\Bv}_h &= -i H \tilde{\Bv}_h,\quad \quad 
				    \tilde \Bv_h(0) & = \Bu_f(0) \otimes \deltabf_h(0) \otimes \tilde \Bg_h,
		\end{aligned}
	\end{equation}
 where  $\tilde \Bg_h = (\Phi^p)^{\dagger} \Bg_h$.
	At this point, a quantum simulation can be performed for the Hamiltonian system described by \eqref{eq:schro of tilde w}. The inverse Quantum Fourier transform is then applied to yield $\Bv_h = [I_n  \otimes I_{N_s} \otimes \Phi^p] \tilde \Bv_h$.

	\subsubsection{Recovery from Schr\"odingerisation}
	Finally,  one needs to recover  
	$\Bu$ from the warped transformation by selecting a suitable domain in $p$. We adopt the recovery strategy outlined in \cite{JLM24}.
	\begin{theorem}\label{thm:recovery u}
		Assume the eigenvalues of $H_1$ satisfy $\lambda_1(H_1)\leq \lambda_2(H_1)\cdots\leq \lambda_n(H_1)$, the solution to Equation
		\eqref{eq:inhomo linear Eq}  can be recovered from Equation \eqref{eq:up} by
		\begin{equation}\label{eq: recover uh e}
			\Bu = e^{p^*} \Bw(p^*),\quad \text{for}\;\text{any}\; p^*> p^{\Diamond},
		\end{equation}
		where $p^{\Diamond}\geq \max\{\lambda_n(H_1) T,0\}$ with $T$ the stop time of the evolution; or  by using the integration,
		\begin{equation}\label{eq:recover integration}
			\Bu = e^{p^{*}}\int_{p^{*}}^{\infty} \Bw(q)\;dq, \quad p^*>p^{\Diamond}.
		\end{equation}
	\end{theorem}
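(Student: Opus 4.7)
The plan is to compare $\Bw$ with a hand-crafted reference solution that carries the correct exponential profile $e^{-p}$ for all $p\in\bbR$, and then control the discrepancy by finite propagation speed in the $p$-direction. First, define $\Bw_{\mathrm{ex}}(t,p):=e^{-p}\Bu_f(t)$, where $\Bu_f(t)=e^{t(H_1+iH_2)}\Bu_f(0)$ solves \eqref{eq:ODE1}. A one-line differentiation shows
\begin{equation*}
\partial_t \Bw_{\mathrm{ex}} = e^{-p}(H_1+iH_2)\Bu_f = -H_1\,\partial_p \Bw_{\mathrm{ex}} + iH_2\,\Bw_{\mathrm{ex}},
\end{equation*}
so $\Bw_{\mathrm{ex}}$ is an exact solution of \eqref{eq:up}. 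Because the warped initial condition satisfies $g(p)=e^{-p}$ for $p\ge 0$ (with only a smooth extension for $p<0$), the difference $\delta\Bw := \Bw - \Bw_{\mathrm{ex}}$ obeys the same homogeneous PDE \eqref{eq:up} and has initial data supported in $\{p\le 0\}$.

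Next, invoke finite propagation speed. Since $H_1$ is Hermitian, diagonalize $H_1 = U\Lambda U^{\dagger}$ with $\Lambda=\mathrm{diag}(\lambda_1,\ldots,\lambda_n)$ and set $\tilde{\delta\Bw}:=U^{\dagger}\delta\Bw$ to obtain
\begin{equation*}
\partial_t \tilde{\delta\Bw} + \Lambda\,\partial_p \tilde{\delta\Bw} = i\,(U^{\dagger}H_2 U)\,\tilde{\delta\Bw}.
\end{equation*}
The principal part splits into $n$ scalar transport equations with speeds $\lambda_j$; the zeroth-order coupling on the right affects only amplitudes, not supports. A standard energy estimate on the half-space $\{(t,p):p>\lambda_n t\}$ (the boundary flux is nonpositive by Hermiticity of $\Lambda$, and the $H_2$ contribution in the bulk is absorbed by Gr\"onwall's inequality) forces $\tilde{\delta\Bw}\equiv 0$ there, hence $\delta\Bw(t,\cdot)$ stays supported in $\{p\le \max(\lambda_n t,0)\}$ for every $t\in[0,T]$.

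With this in hand, for any $p^*>p^{\Diamond}\ge\max\{\lambda_n T,0\}$ we have $\delta\Bw(T,p^*)=0$, i.e.\ $\Bw(T,p^*)=e^{-p^*}\Bu_f(T)$, which rearranges to \eqref{eq: recover uh e}. For the integral form \eqref{eq:recover integration}, the identity $\Bw(T,q)=e^{-q}\Bu_f(T)$ holds for all $q>p^{\Diamond}$, so
\begin{equation*}
e^{p^*}\int_{p^*}^{\infty}\Bw(T,q)\,\d q = e^{p^*}\,\Bu_f(T)\int_{p^*}^{\infty}e^{-q}\,\d q = \Bu_f(T),
\end{equation*}
completing the proof. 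The main obstacle is the finite-propagation-speed step: one must verify rigorously that the zeroth-order term $iH_2$ does not enlarge the characteristic cone. The cleanest route is the energy/flux identity on the cone $\{p>\lambda_n t\}$ just sketched, which is classical for symmetric hyperbolic systems but should be stated explicitly since $H_2$ couples the components that $U$ has decoupled in the principal symbol.
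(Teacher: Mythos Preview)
The paper does not actually supply a proof of this theorem; it is quoted from the reference \cite{JLM24} (``We adopt the recovery strategy outlined in \cite{JLM24}''), so there is no in-paper argument to compare against. Your argument is correct and is essentially the standard one: compare with the exact profile $\Bw_{\mathrm{ex}}=e^{-p}\Bu_f$, observe that the mismatch is initially supported in $\{p\le 0\}$, and use finite propagation speed for the symmetric hyperbolic system to confine the support to $\{p\le \max(\lambda_n t,0)\}$.

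Two small sharpenings. First, you do not need Gr\"onwall at all: since $H_2$ is Hermitian, $2\,\mathrm{Re}\langle iH_2\delta\Bw,\delta\Bw\rangle=0$ pointwise, so the zeroth-order term contributes exactly zero to the energy identity on the half-space; the whole estimate is just the boundary flux $\langle H_1\delta\Bw,\delta\Bw\rangle-\lambda_n|\delta\Bw|^2\le 0$ at $p=\lambda_n t$ (or $\le 0$ at $p=0$ when $\lambda_n\le 0$). Second, the diagonalization $H_1=U\Lambda U^{\dagger}$ tacitly assumes $H_1$ is time-independent; in this paper $H_1$ depends on $t$ through $\BF(t)$, so changing variables by $U(t)^{\dagger}$ would throw in an extra $\dot U^{\dagger}$ term. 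The cleanest fix is to skip the diagonalization and run the energy estimate directly on $\delta\Bw$: integrating by parts gives the boundary term $\langle H_1(t)\delta\Bw,\delta\Bw\rangle\big|_{p=ct}\le \lambda_n(H_1(t))\,|\delta\Bw|^2$, and choosing $c\ge \sup_{t}\lambda_n(H_1(t))$ yields $\tfrac{d}{dt}\int_{ct}^{\infty}|\delta\Bw|^2\,\d p\le 0$ as before.
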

	In the case of Equation \eqref{eq:matrix FDTD}, we have $|\lambda(H_1)| \leq \frac{1}{2}$, the numerical electromagnetic field is recovered by 
	\begin{equation}\label{eq:recovery uh}
		\begin{aligned}
			\Bu_h \ket{k} &= e^{p_k}\big(I_n\otimes 
			\ket{k}\bra{k}\big)  \Bw_h^D  \\
			& = e^{p_k}\triangle s\big(I_n\otimes \ket{k}\bra{k}\big) \sum_{l\in [N_s]}\big(I_{n} \otimes \bra{l} \times I_{N_p}\big)  \Bv_h\\
			& = e^{p_k} \triangle s \sum_{l\in [N_s]} \big(I_{n} \otimes \bra{l}\otimes I_{N_p}\big)M_k \Bv_h
			  \qquad p_k>T/2,
		\end{aligned}
	\end{equation}
	where $M_k = I_{n\times N_s} \otimes \ket{k}\bra{k} $ is the projection measurement operator. 
	Here we remark that we use the quantum computer to evaluate 
	\begin{equation*}
		\Bv_h =\big(I_n  \otimes I_{N_s} \otimes \Phi^p\big) U  
                \big(I_n \otimes I_{N_s} \otimes (\Phi^p)^{\dagger} \big)
                \big(\Bu_f(0) \otimes \deltabf_h(0)  \otimes \Bg_h\big)
                \triangleq \Cu \Bv_h(0),
	\end{equation*}
	where $U = \exp(-i HT)$ is unitary and $\Cu = (I_n \otimes I_{N_s} \otimes \Phi^p ) U  (I_n  \otimes I_{N_s} \otimes (\Phi^p)^{\dagger})$.
	A measurement corresponding to the projection 
	$M_k =I_{n}\otimes I_{N_s} \otimes \ket{k}\bra{k}$ is performed to select the $\ket{k}$ component of the state $\ket{\Bv_h}$. Following the measurement, the summation is executed on a classical computer. The complete circuit for implementing the quantum simulation of $\ket{\Bv_h}$ is illustrated in Figure~\ref{schr_circuit}, where QFT (IQFT) denotes the (inverse) quantum Fourier transform.
		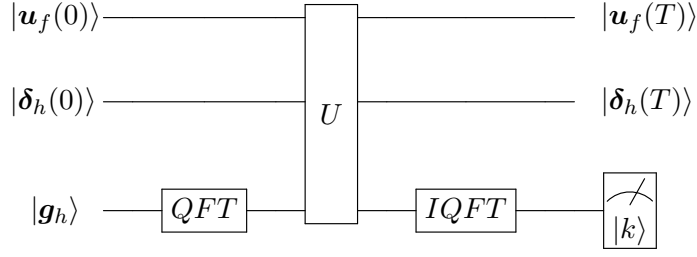
\begin{figure}[t!] 
		\centerline{
			\Qcircuit @C=1em @R=2em {
				\lstick{\hbox to 2.7em{$\ket{\Bu_f(0)}$\hss}}
				& \qw
				& \qw
				& \qw
				& \multigate{2}{U}
				& \qw
				& \qw
				& \qw
				& \qw  & \hbox to 2em{$\ket{\Bu_f(T)}$\hss} \\
				\lstick{\hbox to 2.7em{$\ket{\deltabf_h(0)}$\hss}}
				& \qw
				& \qw
				& \qw
				& \ghost{U}
				& \qw
				& \qw
				& \qw
				& \qw   & \hbox to 2em{$\ket{\deltabf_h(T)}$\hss} \\
				\lstick{\hbox to 2em{$\ket{\Bg_h}$\hss}} 
				& \qw
				& \gate{QFT}
				& \qw	
				& \ghost{U}	
				& \qw 
				& \gate{IQFT}
				& \qw	 
				& \qw  &\meterB{\ket{k}}
			}  
		}
		\caption{Quantum circuit for Schr\"odingerisation of Equation~\eqref{eq:Hamiltonian v}.}
		\label{schr_circuit}
	\end{figure}
	\subsection{A higher order improvement in the extended domains}
 In order to achieve $r$-th order convergence rates of spectral methods and $r$-th order approximation to the delta function, it is essential that $\beta(\xi)\in H^r(\bbR)$ is sufficiently smooth and 
 fulfills the condition 
	 \begin{equation}\label{eq:rth order delta}
	 	\big|\triangle s \sum_{l\in [N_s]} \delta_{h}(s_l - s)f(s_l) - f(s)\big| \leq C \triangle s^r
	 \end{equation}  
	 for $f(x)\in C^r(\bbR)$ \cite{TE04}.
  Additionally, the initial data in the extended domain $g(p)\in C^r(\bbR)$ should exhibit sufficient smoothness to ensure that $\Bw_h(p,t)\in C^r(\bbR)$.
Then,  we use the construction in \cite[Appendix]{LMC16} to obtain $r$-th order approximation to the delta function $\beta(\xi)\in H^r(\bbR)$. As an example,  we present the 3rd-order interpolating function $\beta(\xi)\in H^2(\bbR)$ from \cite{WMC21}, given by 
	\begin{equation}\label{eq:beta}
		\beta(\xi) = \begin{cases}
		  1-\frac{5}{2}|x|^2 +\frac{3}{2}|x|^3,\quad 0\leq |x| \leq 1,\\
		  \frac{1}{2}(2-|x|)^2(1-|x|), \quad 1\leq |x|\leq 2, \\
		  0, \quad \text{otherwise}.
		\end{cases}
	\end{equation}
    Meanwhile we choose $g(p)\in C^2(\bbR)$ in \cite{JLM242}:
	\begin{equation}\label{eq:gp} 
		g(p) =  \begin{cases}
            (\frac{19}{2}-\frac{19}{2}e^{-1})p^5+(\frac{49}{2}-23e^{-1})p^4+(\frac{35}{2}-\frac{29}{2}e^{-1})p^3+\frac{1}{2}p^2-p+1 \quad p\in(-1,0), \\
			e^{-|p|} \quad p\in (-\infty,-1]\cup[0,\infty).
		\end{cases}
	\end{equation}
    For general $r$, this will lead to essentially the spectral accuracy for the approximation in the extended space, if a spectral method is used \cite{JLM242}. 
 
	\section{Quantum circuit for  Maxwell's equations}
	The complete circuit for implementing the Schr\"odingerization method for Maxwell's equations is presented in Figure~\ref{schr_circuit}. In this section, we focus on the detailed construction of the quantum circuit for the unitary operation
	\begin{equation*}
		U = \exp(-i HT) = \exp(-i (I_{n}\otimes P_s\otimes I_{N_p} + H_S)T).
	\end{equation*}

    According to  Equation \eqref{eq:EhBhJhrh}, we define
	\begin{equation*}
		\BJ_{\alpha h} = \sum_{\Bj} \frac{J_{\alpha,\Bj}}{c_0}\ket{\Bj},\quad \alpha=x,y,z,\rho,
	\end{equation*}
	where $J_{\rho,\Bj} = \rho_{\Bj}$.
	For simplicity, we make the following assumptions.

	\begin{assumption}\label{ass:J}
	For the parameter $s_l$, $0\leq s_l<T$, $l\in \Ci_s$ with $|\Ci_s| \leq \mathscr{O}(n_s)$, the discretization source term $\BJ_{\alpha h}^l = \BJ_{\alpha h}(s_l)$ takes the value either $J_{\alpha ,1}^l$, or $J_{\alpha,0}^l$,  such that 
    \begin{equation*}
        \BJ_{\alpha h,\Bj}^l = 
        \begin{cases}
        J_{\alpha,1}^l\quad \text{for}\;\,j\in \Ci_{\alpha}^l,\\
         J_{\alpha,0}^l\quad \text{otherwise},
         \end{cases}\quad 
        \alpha = x,y,z,\rho.
    \end{equation*}
  In addition, $\BJ_{\alpha,h}^l \equiv 0$ for $l\in [N_s]\backslash \Ci_s$, and we have $|\Ci_{\alpha}| = \max_{l\in \Ci_s} |\Ci_{\alpha}^l| \leq  \mathscr{O}(m)$.
	\end{assumption}	
    
	To simplify notation, we denote 
	\begin{equation}\label{eq:notation}
		\Theta_{j} = \sigma_{j_{3m-1}j_{3m-1}} \otimes \cdots \otimes \sigma_{j_0 j_0}, \qquad 0\leq j\leq 2^{3m}-1,
	\end{equation}
	where $j = (j_{3m-1}\cdots j_0)=\sum_{i=0}^{3m-1} j_i 2^i$, $j_i\in \{0,1\}$ is the binary representation of $j$.
	By introducing 
	\begin{align}
		\BF^{\alpha}(s_l) &= \sum_{j \in \Ci_{\alpha}^l} \frac{1}{c_0}(J_{\alpha,1}^l - J_{\alpha,0}^l) \ket{j}\bra{j} + \frac{J_{\alpha,0}^l}{c_0} \onebf^{\otimes 3m}  \notag \\
		&= \sum_{j \in \Ci_{\alpha}^l} \frac{1}{c_0}(J_{\alpha,1}^l - J_{\alpha,0}^l)
		\Theta_j 
		+\frac{J_{\alpha,0}^l}{c_0} \onebf^{\otimes 3m}, \quad 
		\alpha = x,y,z,\rho, \label{eq:Falpha}
	\end{align}
	one has  the formula of $\BF$ as
	\begin{equation}\label{eq:Fsl}
		\BF(s_l) = \sigma_{3,x}\otimes \BF^x(s_l)
		+\sigma_{3,y}\otimes \BF^y(s_l)
		+\sigma_{3,z}\otimes \BF^z(s_l)
		+\sigma_{3,\rho} \otimes \BF^{\rho}(s_l),
	\end{equation}
	with $\sigma_{3,x} = \sigma_{00}^{\otimes 3}$,
	$\sigma_{3,y} = \sigma_{00}^{\otimes 2}\otimes 
	\sigma_{11}$, $\sigma_{3,z} = \sigma_{00}\otimes \sigma_{11}\otimes \sigma_{00}$, $\sigma_{3,\rho} = \sigma_{11}^{\otimes 3}$.

   We note that by employing the autonomousization technique described in \cite{cjL23}, we can transform the time evolution of the Hamiltonian system into a spatial variable-coefficient Schr\"odinger-type equation. Consequently, constructing the circuit diagram for the time evolution of the Hamiltonian system essentially involves creating a quantum circuit for variable-coefficient unitary operators. For more general cases of variable-coefficient matrices, we refer to \cite{SBHF24,STKY24}.

   The essential aspect of constructing the circuit lies in implementing the quantum circuit for the Hamiltonian operator $U(\tau) = \exp(iH\tau)$, where $\tau$ represents the time increment. This can be achieved by applying the first-order Lie-Trotter-Suzuki decomposition. Given that the Hamiltonian $H$ can be expressed as 
   \begin{align}
		H&= H_{Ds}+H_F+H_{\text{curl}}, \quad
        H_{Ds} = \onebf^{\otimes (3m+4)}\otimes P_s\otimes \onebf^{\otimes n_p},\label{eq:H}\\
		H_F&=  \frac{1}{2}  \sum_{l=0}^{N_s-1}  \big(X \otimes \BF(s_l) \otimes \ket{l}\bra{l}\otimes D_p 
           -Y \otimes \BF(s_l) \otimes \ket{l}\bra{l} \otimes  \onebf^{\otimes n_p} \big),\\
		H_{\text{curl}} & =  
		i(\sigma_{00}\otimes \sigma_{01}\otimes M_{\nabla_h\times}^E +
		\sigma_{00}\otimes\sigma_{10}\otimes M_{\nabla_h\times}^B)\otimes \onebf^{\otimes (n_p +n_s)}, \label{eq:Hcurl}
	\end{align}
    we introduce the operators 
    \begin{equation}
       \begin{aligned}
        \Cu_{1}(\tau) &= \exp(-iH_{D_s}\tau),\quad 
        \Cu_2(\tau) = \exp(-iH_F\tau),\quad 
        \Cu_3(\tau) &= \exp(-iH_{\text{curl}}\tau).\quad 
        \end{aligned}
    \end{equation} 
    Thus, the approximation for $U(\tau)$  can be formulated as
     \begin{equation}\label{eq:U1U2U3}
        \exp(-iH\tau) \approx \Cu_1(\tau)\Cu_2(\tau)\Cu_3(\tau).
    \end{equation}
	\subsection{Quantum circuit for $\Cu_1(\tau)$}
 
	From the definition of $P_s$, it is easy to construct the circuit of $\Cu_1$ as follows
	\begin{equation}\label{eq:u1_t}
		\begin{aligned}
			\Cu_1(\tau) &= \exp(-i \onebf^{\otimes 3m+4} \otimes \Phi^{s} D_s   (\Phi^s)^{\dagger} \otimes \onebf^{\otimes n_p})\\
			& = \onebf^{\otimes(3m+4)} \otimes \bigg(\Phi^s \exp(\frac{-i\tau }{2S}\sum_{l\in [N_s]}(l-\frac{N_s}{2})\ket{l}\bra{l}) (\Phi^s)^{\dagger}\bigg)\\
			& =\onebf^{\otimes(3m+4)} \otimes \bigg(\Phi^s  \big(\exp(\frac{i \tau N_s}{4S}) \onebf^{\otimes n_s}\big)
			\cdot
			\big(\exp(\frac{-i\tau}{2S}\sum_{l\in[N_s]} l\ket{l}\bra{l})\big) (\Phi^s)^{\dagger} \bigg) \otimes \onebf^{\otimes n_p}\\
			& \triangleq \onebf^{\otimes(3m+4)} \otimes \tilde \Cu_{1}(\tau) \otimes \onebf^{\otimes n_p}.
		\end{aligned}
	\end{equation}
	Using the binary representation of integers $l=(l_{n_s-1}\cdots l_0)=\sum_{j=0}^{n_s-1} l_j 2^j$, $l_j\in \{0,1\}$, one gets 
	\begin{equation}\label{eq:u1_2t}
		\begin{aligned}
			\exp(\frac{-i \tau}{2S} \sum_{l\in [N_s]} l \ket{l}\bra{l}) & = 
			\bigotimes_{j=0}^{n_s-1} \sum_{l_j =0,1} e^{-\frac{i\tau}{2S}2^j} \sigma_{l_jl_j} 
			= \bigotimes_{j=0}^{n_s-1} P(\frac{ -2^{j-1}}{S} \tau), 
		\end{aligned}
	\end{equation}
	where $P(\theta) = \begin{bmatrix}
		1 &0\\
		0 &e^{i\theta}
	\end{bmatrix}$ is the phase gate.
	According to Equation \eqref{eq:u1_t} and \eqref{eq:u1_2t}, one gets the circuit of $\tilde U_1(\tau)$ as shown in Figure \ref{tilde U1}.
	\begin{figure}
		\centerline{
			\Qcircuit @C=1em @R=2em {
			 \lstick{\hbox to 2.5em{$\ket{q_{n_s}}$\hss}}
			&  \multigate{3}{QFT} 
			& \qw
			& \qw
			& \qw
			& \gate{P(\frac{-2^{n_s-1}\tau}{S})}
			& \qw
			& \qw
			& \multigate{3}{IQFT} 
			& \qw \\
			\lstick{\hbox to 2.5em{$\cdots$\hss}}
			& \ghost{QFT}	
			& \qw
			& ...
			& 
			& ...
			&
			& \qw
			& \ghost{IQFT}	
			& \qw \\ 
			\lstick{\hbox to 2.5em{$\ket{q_2}$\hss}} 
			& \ghost{QFT}	
			& \qw
			& \qw	
			& \qw	
			& \gate{\;\,P(\frac{-2^{1}\tau}{S})\,\;\;}
			& \qw 
			& \qw
			& \ghost{IQFT}		 
			& \qw\\
		    \lstick{\hbox to 2.5em{$\ket{q_1}$\hss}} 
			& \ghost{QFT}	
			& \qw
			& \gate{I_{\frac{\tau N_s}{2S}}}
			& \qw	
			& \gate{\;\,P(\frac{-2^{0}\tau}{S})\,\;\;}
			& \qw 
			& \qw
			& \ghost{IQFT}		 
			& \qw
			}  
		}
		\caption{Quantum circuit for the operator $\tilde \Cu_1(\tau)$, where 
		$I_{\theta} = R_z(-\theta) P(\theta)$.}
		\label{tilde U1}
	\end{figure}
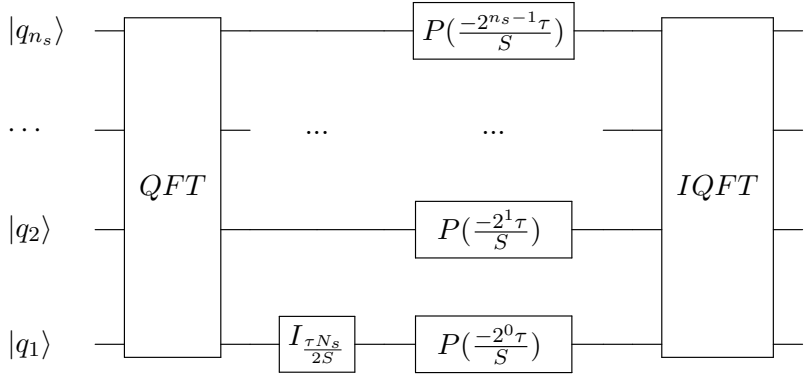
	
\subsection{Quantum circuits for $\Cu_2(\tau)$}
	  For any fixed $l=(l_{n_s} \cdots l_0)\in \Ci_s$ and $j=(j_{3m-1}\cdots j_0)\in \Ci_{\alpha}^l$, according to  the notation in Equation \eqref{eq:notation} and  the representation of spatial variable coefficient matrix in Equation \eqref{eq:Falpha}, we define 
     \begin{align}
        H_{X_p,l_j}^{\alpha} &= \frac{1}{2c_0} (J_{\alpha,1}^l-J_{\alpha,0}^l) X\otimes \sigma_{3,\alpha} \otimes
			\Theta_j  \otimes \Theta_l \otimes D_p
			, \\
   H_{X_p,l}^{\alpha} &=\frac{1}{2c_0}  J_{\alpha,0}^l X\otimes \sigma_{3,\alpha} \otimes  \onebf^{\otimes 3m}  \otimes \Theta_l \otimes D_p ,\\
   H_{Y,l_j}^{\alpha} & =\frac{1}{2c_0} (J_{\alpha,0}^l-J_{\alpha,1}^l) Y\otimes \sigma_{3,\alpha} \otimes
			\Theta_j \otimes \Theta_l \otimes \onebf^{\otimes n_p}
			,\\
   H_{Y,l}^{\alpha} &=\frac{1}{2c_0} (-J_{\alpha,0}^l) Y\otimes \sigma_{3,\alpha} \otimes  \onebf^{\otimes (3m)}  \otimes \Theta_l  \otimes \onebf^{\otimes n_p}.
    \end{align}
    Then one can rewrite $H_F$ as
    \begin{equation}\label{eq:H_F}
        H_F = \sum_{\alpha = x,y,z,\rho}\sum_{l\in \Ci_s} 
        \sum_{j\in \Ci_{\alpha}^l}
        \big(
        H_{X_p,l_j}^{\alpha} + H_{Y_p,l_j}^{\alpha} 
        \big)
        +H_{X_p,l}^{\alpha}+H_{Y_p,l}^{\alpha}.
    \end{equation}
    Using Trotter's splitting, one has 
    \begin{equation}\label{eq:V2}
        \Cu_2(\tau) \approx \prod_{\alpha =x,y,z,\rho} \prod_{\substack{l\in \Ci_s}} \prod_{j\in \Ci_{\alpha}^l}\Cu_{X_p,l_j}^{\alpha} \cdot \Cu_{Y_p,l_j}^{\alpha} \cdot\Cu_{X_p,l}^{\alpha} \cdot\Cu_{Y_p,l}^{\alpha}
        \triangleq \Cv_2(\tau),
    \end{equation}
    where
    \begin{align*}
        \Cu_{\beta,l_j}^{\alpha} =\exp(-i H_{\beta,l_j}^{\alpha} \tau),\quad 
        \Cu_{\beta,l}^{\alpha} = \exp(-i H_{\beta,l}^{\alpha} \tau),\quad \beta = \{X_p,Y_p\}.
    \end{align*}

    Define a multi-controlled RX($\theta$) gate acting on the ($m+1$)-th qubit. This gate becomes active when the qubits indexed in $\Ci^1$ are all set to 1 and the qubits indexed in $\Ci^0$ are all set to 0,
    denoted 
    \begin{equation}\label{eq:CRX}
    	\CRX_{m+1}^{\Ci^1,\Ci^0}(2\theta) = \exp\big(-i \theta X \otimes \tilde \sigma_{j_{m-1}j_{m-1}} \otimes \cdots \otimes \tilde\sigma_{j_0 j_0}\big),
    \end{equation}
    where $\tilde \sigma_{j_k}$, $k\in [m]$ is determined by 
    \begin{equation*}
    	\tilde \sigma_{j_kj_k} = \begin{cases}
    		\sigma_{00} \quad k\in\Ci^0,\\
    		\sigma_{11} \quad k\in \Ci^1,\\
    		\onebf\quad \;\;\,\text{otherwise}.
    	\end{cases}
    \end{equation*}
    Similarly, we can define $\CRY^{\Ci^1,\Ci^0}_{m+1}(2\theta)$ and $\CRZ^{\Ci^1,\Ci^0}_{m+1}(2\theta)$. 
    For the fixed $l=(l_{n_s} \cdots l_0)\in \Ci_s$ and $j=(j_{3m-1}\cdots j_0)\in \Ci_{\alpha}^l$, 
	the index  sets for control points are defined by 
     \begin{equation}\label{eq:index Ij Jl}
      \Ci_{j,3m}^{i} =\{k\in [3m]: j_k=i\},\quad 
      \Ci_{l,n_s}^{i} = \{k\in [n_s]:l_k =i\},\quad i=0,1.   
     \end{equation}

   \begin{figure}[t!]
      \centering
	  \subfigure[$\Cu_{X,l_j}^{x}(2\theta)$]{
				\centering
				\Qcircuit @C=1em @R=2em {
					\lstick{\hbox to 3em{$\ket{\phi_{3m+4}}$\hss}} &  \gate{\RX(2\theta)}    & \qw    \\
					\lstick{\hbox to 3em{$\ket{\phi_{3m+3}}$\hss}} &  \ctrlo{-1}     & \qw      \\
					\lstick{\hbox to 3em{$\ket{\phi_{3m+2}}$\hss}} &  \ctrlo{-1}    & \qw      \\
					\lstick{\hbox to 3em{$\ket{\phi_{3m+1}}$\hss}} &  \ctrlo{-1}    & \qw     \\
					\lstick{\hbox to 2.5em{$\ket{\phi_{3m}}$\hss}} &  \ctrlo{-1}    & \qw     \\
					\lstick{\hbox to 2.5em{$\cdots$\hss}}          &  \ctrlo{-1}    & \qw     \\ 
					\lstick{\hbox to 2.5em{$\ket{\phi_{1}}$\hss}}  &  \ctrl{-1}   & \qw       \\
					\lstick{\hbox to 2.5em{$\ket{q_{n_s}}$\hss}}     & \ctrl{-1}	  & \qw      \\
					\lstick{\hbox to 2.5em{$\cdots$\hss}}      & \ctrlo{-1}	    & \qw      \\
					\lstick{\hbox to 2.5em{$\ket{q_1}$\hss}}         & \ctrlo{-1}	    &\qw    
                    \gategroup{7}{2}{5}{2}{.7em}{--}
                    \gategroup{10}{2}{8}{2}{.7em}{--}
				}   
			}
		\qquad \qquad \qquad \qquad \qquad \qquad
			\subfigure[$\Cu_{X,l}^{x}(2\theta)$]{
				\centering
				\Qcircuit @C=1em @R=2em {
					\lstick{\hbox to 3em{$\ket{\phi_{3m+4}}$\hss}} &  \gate{\RX(2\theta)}    & \qw    \\
					\lstick{\hbox to 3em{$\ket{\phi_{3m+3}}$\hss}} &  \ctrlo{-1}     & \qw      \\
					\lstick{\hbox to 3em{$\ket{\phi_{3m+2}}$\hss}} &  \ctrlo{-1}    & \qw      \\
					\lstick{\hbox to 3em{$\ket{\phi_{3m+1}}$\hss}} &  \ctrlo{-1}    & \qw     \\
					\lstick{\hbox to 2.5em{$\ket{\phi_{3m}}$\hss}} &  \qw    & \qw     \\
					\lstick{\hbox to 2.5em{$\cdots$\hss}}          &  \qw    & \qw     \\ 
					\lstick{\hbox to 2.5em{$\ket{\phi_{1}}$\hss}}  &  \qw   & \qw       \\
					\lstick{\hbox to 2.5em{$\ket{q_{n_s}}$\hss}}     & \ctrl{-4}	  & \qw      \\
					\lstick{\hbox to 2.5em{$\cdots$\hss}}      & \ctrlo{-1}	    & \qw      \\
					\lstick{\hbox to 2.5em{$\ket{q_1}$\hss}}         & \ctrlo{-1}	    &\qw    
                   \gategroup{10}{2}{8}{2}{.7em}{--}
				}   
			}
			 \caption{Quantum circuit for $\Cu_{X,l_j}^{x}(2\theta)$ and $\Cu_{X,l}^{x}(2\theta)$ with $\Ci_{j,3m}^{1} = \{0\}$, $\Ci_{l,n_s}^1 =\{n_s-1\}$.}
	 	\label{fig:tilde UX_lj}
    \end{figure} 
    Using the definition provided in \eqref{eq:CRX}, we define the operators corresponding to the circuits shown in Figure \ref{fig:tilde UX_lj} as follows
    \begin{align}
        \Cu_{X,l_j}^{\alpha}(2\theta) &= \exp(-i \theta X \otimes \sigma_{3,\alpha}\otimes \Theta_j \otimes \Theta_l)
        =\CRX_{3m+4+n_s}^{\Ci_{\alpha,1}^1,\Ci_{\alpha,1}^0}(2\theta) \quad \alpha = x,\,y,\,z,\,\rho, \label{eq:Uxlj}\\
         \Cu_{X,l}^{\alpha}(2\theta) &= \exp(-i \theta X \otimes \sigma_{3,\alpha}\otimes \onebf^{\otimes 3m} \otimes \Theta_l)
        =\CRX_{3m+4+n_s}^{\Ci_{\alpha,0}^1,\Ci_{\alpha,0}^0}(2\theta) \quad \alpha = x,\,y,\,z,\,\rho. \label{eq:Uxlj}
    \end{align}
    Here, the sets $\Ci_{\alpha,1}^k$ and $\Ci_{\alpha,0}^k$ are defined as
     \begin{align*}
    	\Ci_{\alpha,1}^k &= \{ n_s+3m+\Ci_{\alpha}^k\}\cup \{n_s+\Ci_{j,3m}^k\}\cup \Ci_{l,n_s}^k, \\
    	\Ci_{\alpha,0}^k &=  \{ n_s+3m+\Ci_{\alpha}^k\}\cup \Ci_{l,n_s}^k,
    \end{align*}
    for $k=0,\,1$. The sets $\Ci_{j,3m}^k$, $\Ci_{l,n_s}^k$ are defined in \eqref{eq:index Ij Jl}. Additionally, we have
    \begin{equation*}
    \begin{aligned}
    &\Ci_{x}^1=\emptyset, \quad 
     \Ci_y^1 = \{0\},\quad 
     \Ci_z^1 =\{1\},\quad 
     \Ci_{\rho}^1=\{0,1,2\},\quad 
     \Ci_{\alpha}^0 = \{0,1,2\}\backslash \Ci_{\alpha}^1,
     \quad \alpha = x,y,z,\rho,
     \end{aligned}
    \end{equation*}
    as determined by the definition of $\sigma_{3,\alpha}$.
     The circuit of $\Cu_{X,l_j}^{\alpha}(2\theta)$ and 
    $\Cu_{X,l}^{\alpha}(2\theta)$ are shown in Figure  \ref{fig:tilde UX_lj}.
    Similarly, one has $\Cu_{Y,l_j}^{\alpha}(2\theta)$ and $\Cu_{Y,l}^{\alpha}(2\theta)$.
    Thus, it yields
    \begin{align}
        \Cu_{Y_p,l_j}^{\alpha} = \Cu_{Y,l_j}^{\alpha}(\theta_l^{\alpha,1}\tau)\otimes \onebf^{\otimes n_p},\quad 
        \Cu_{Y_p,l}^{\alpha} = \Cu_{Y,l}^{\alpha}(\theta_l^{\alpha,0} \tau)\otimes \onebf^{\otimes n_p},\label{eq:UY_p}
    \end{align}
    where $\theta_l^{\alpha,1} = \frac{J_{\alpha,0}^l-J_{\alpha,1}^l}{c_0}$, $\theta_l^{\alpha,0}=-\frac{J_{\alpha,0}^l}{c_0}$.

	Noting that $D_p = \sum_{k=0}^{N_p-1} \frac{1}{L}(k-\frac{N_p}{2}) \ket{k}\bra{k}$,
    one gets the expression of $\Cu_{X_p,l_j}^{\alpha}$, $\Cu_{X_p,l}^{\alpha}$ as follows
    \begin{align*}
        \Cu_{X_p,l_j}^{\alpha} &=
        \bigg(\sum_{k\in [N_p]} \Cu_{X,l_j}^{\alpha}(k\tau \theta_{l_p}^{\alpha,1}) \otimes \ket{k}\bra{k} \bigg)
        \bigg(\Cu_{X,l_j}^{\alpha}(-N_p \tau \theta_{l_p}^{\alpha,1})\otimes \onebf^{\otimes n_p}\bigg)
        ,\\
        \Cu_{X_p,l}^{\alpha} &=
        \bigg(\sum_{k\in [N_p]} \Cu_{X,l}^{\alpha}(k\tau \theta_{l_p}^{\alpha,0}) \otimes \ket{k}\bra{k} \bigg)
        \bigg(\Cu_{X,l}^{\alpha}(-N_p \tau \theta_{l_p}^{\alpha,0})\otimes \onebf^{\otimes n_p}\bigg),
    \end{align*}
    where $\theta_{l_p}^{\alpha,1} = \frac{1}{L c_0}(J_{\alpha,0}^l-J_{\alpha,1}^l)$ and 
    $\theta_{l_p}^{\alpha,0} = \frac{J_{\alpha,0}^l}{2Lc_0}$.
    Using the binary representation of integers $k=(k_{n_p-1}\cdots k_0) = \sum_{g=0}^{n_p-1} 2^g k_g$, $k_g\in \{0,1\}$, one gets 
    $$\Cu_{X,l_j}^{\alpha}(k\theta) = (\Cu_{X,lj}^{\alpha}(\theta))^k =\prod_{g=1}^{n_p}(\Cu_{X,lj}^{\alpha}(\theta))^{k_g2^g}.$$
    Therefore, there holds
    \begin{equation}
        \begin{aligned}
        \sum_{k\in [N_p]} \Cu_{X,l_j}^{\alpha}(k\tau \theta_{l_p}^{\alpha,1}) \otimes \ket{k}\bra{k} 
        &= \sum_{k_{n_p}\cdots k_0} \prod_{g=0}^{n_p-1}  (\Cu_{X,l_j}^{\alpha}(\tau \theta_{l_p}^{\alpha,1}))^{k_g2^g}\otimes (\sigma_{k_{n_p}k_{n_p}}\otimes \cdots \otimes \sigma_{k_0k_0}) \\
        &=\coprod_{g=0}^{n_p-1} \left( \sum_{k_g=0,1}  (\Cu_{X,l_j}^{\alpha}( \tau \theta_{l_p}^{\alpha,1}))^{k_g2^g} \otimes \sigma_{k_gk_g}
        \right)\\
        & = \coprod_{g=0}^{n_p-1}\left(
        (\Cu_{X,l_j}^{\alpha}(\tau \theta_{l_p}^{\alpha,1}))^{k_g2^g} \otimes \ket{1}\bra{1}
        +\onebf^{\otimes 3m+4+n_s}\otimes \ket{0}\bra{0}
        \right).
        \end{aligned}
    \end{equation}
    The product $\coprod$ denotes the regular matrix product for the first register, which consists of $n_s+3m+4$ qubits, and the tensor product for the second register. Since $(\Cu_{X,l_j}^{\alpha}(\theta))^{2^g}$ can be implemented at a cost independent of $g$, the advantage of applying the binary representation of $k$ can be realized. The circuit for $\Cu_{X_p,l_j}^{\alpha}$ is illustrated in Figure \ref{tilde U2jl_gamma12}. The circuit for $\Cu_{X_p,l}^{\alpha}$ is obtained by replacing $\Cu_{X,l_j}^{\alpha}$ with $\Cu_{X,l}^{\alpha}$, as shown in Figure \ref{tilde U2l_gamma12}.
	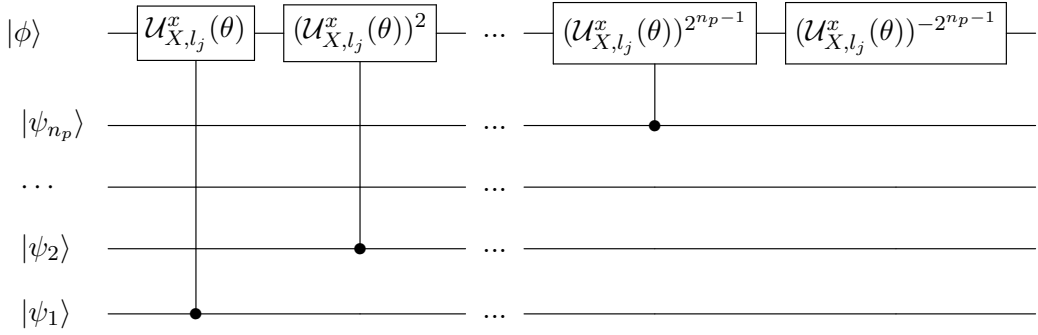
\begin{figure}[t!]
			\centerline{
			\Qcircuit @C=1em @R=2em {
				\lstick{\hbox to 3em{$\ket{\phi}$\hss}} &  \gate{\Cu_{X,l_j}^x(\theta)} & \gate{(\Cu_{X,l_j}^x (\theta))^{2}} & \qw   &...  &  & \gate{(\Cu_{X,l_j}^x(\theta))^{2^{n_p-1}}}     & \gate{(\Cu_{X,l_j}^{x}(\theta))^{-2^{n_p-1}}} & \qw    \\
				\lstick{\hbox to 2.5em{$\ket{\psi_{n_p}}$\hss}}  &  \qw	     & \qw        & \qw   &...  &  & \ctrl{-1}     & \qw  & \qw   \\
				\lstick{\hbox to 2.5em{$\cdots$\hss}}      &  \qw	     & \qw        & \qw   &...  &  & \qw           & \qw     & \qw\\
				\lstick{\hbox to 2.5em{$\ket{\psi_2}$\hss}}      & \qw 	     & \ctrl{-3}  & \qw   &...  &  & \qw           & \qw     & \qw \\
				\lstick{\hbox to 2.5em{$\ket{\psi_1}$\hss}}      & \ctrl{-4}	 & \qw	      & \qw   &...  &  & \qw           & \qw    & \qw  \\
			}   		
            }
			 	\caption{Quantum circuit for $\Cu_{X_p,l_j}^{x}$ with $\theta = \tau \theta_{l_p}^{\alpha,1}$.}
	 	\label{tilde U2jl_gamma12}
    \end{figure}
	\begin{figure}[htbp]
			\centerline{
			\Qcircuit @C=1em @R=2em {
				\lstick{\hbox to 3em{$\ket{\phi}$\hss}} &  \gate{\Cu_{X,l}^x(\theta)} & \gate{(\Cu_{X,l}^x (\theta))^{2}} & \qw   &...  &  & \gate{(\Cu_{X,l}^x(\theta))^{2^{n_p-1}}}     & \gate{(\Cu_{X,l}^{x}(\theta))^{-2^{n_p-1}}} & \qw    \\
				\lstick{\hbox to 2.5em{$\ket{\psi_{n_p}}$\hss}}  &  \qw	     & \qw        & \qw   &...  &  & \ctrl{-1}     & \qw  & \qw   \\
				\lstick{\hbox to 2.5em{$\cdots$\hss}}      &  \qw	     & \qw        & \qw   &...  &  & \qw           & \qw     & \qw\\
				\lstick{\hbox to 2.5em{$\ket{\psi_2}$\hss}}      & \qw 	     & \ctrl{-3}  & \qw   &...  &  & \qw           & \qw     & \qw \\
				\lstick{\hbox to 2.5em{$\ket{\psi_1}$\hss}}      & \ctrl{-4}	 & \qw	      & \qw   &...  &  & \qw           & \qw    & \qw  \\
			}   		
            }
			 	\caption{Quantum circuit for $\Cu_{X_p,l}^{x}$ with $\theta = \tau \theta_{l}^{\alpha,1}$.}
	 	\label{tilde U2l_gamma12}
    \end{figure}

	\subsection{Quantum circuit for $\Cu_3(\tau)$}
		Before giving the specific circuit for $\Cu_3(\tau)$, we use the Bell basis to decompose a class of the matrix 
	shown in \cite{HZ24}.
	\begin{lemma}\label{lem:UZU}
		Given an operator of the form $S=e^{i\lambda}\sigma_{01}\otimes \sigma_{10}^{\otimes (n-1)}+e^{-i\lambda}\sigma_{10}\otimes \sigma_{01}^{\otimes (n-1)}$, where $\lambda$ is a real number, there exits a unitary matrix $U$ such that 
		\begin{equation*}
			S =	U_{n}^{\Ci}(-\lambda) \Lambda (U_n^{\Ci}(\lambda))^{\dagger},\quad  
		\end{equation*}
		where $\Lambda = Z \otimes \sigma_{11}^{\otimes(n-1)}$. The unitary matrix $U_n^{\Ci}(-\lambda)$ is defined as
		\begin{equation*}
			U_n^{\Ci}(-\lambda) =(\prod_{k\in \Ci} \CT_k^n)P_n(\lambda)H_n,
		\end{equation*}
		where $\Ci=\{1,2,\cdots,n-1\}$ is the index set,
		$H_n$ is the Hadamard gate acting on the $n$-th qubit, $P_n(\lambda)$ is the phase gate $P(\lambda)$ acting on the $n$-th qubit 
		and $CNOT_{k}^n$ is the CNOT gate acting on the $k$-th qubit controlled by the $n$-th qubit. 
	\end{lemma}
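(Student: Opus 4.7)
The plan is to recognise $S$ and $\Lambda$ as rank-two Hermitian operators, each supported on a two-dimensional subspace of $(\bbC^2)^{\otimes n}$, and then to verify that $U_n^{\Ci}(\lambda)$ effects the change of basis that carries the eigenvectors of $\Lambda$ onto those of $S$ with matching eigenvalues. The proof is thus essentially a spectral comparison together with a direct check on a handful of basis states.

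First, I would unpack the action of $S$. Since $\sigma_{01}\otimes\sigma_{10}^{\otimes(n-1)}$ annihilates every computational basis state except $\ket{10\cdots 0}$, which it maps to $\ket{01\cdots 1}$, the operator $S$ is supported on the two-dimensional subspace $W=\mathrm{span}\{\ket{01\cdots 1}, \ket{10\cdots 0}\}$, with matrix $\bigl(\begin{smallmatrix}0 & e^{i\lambda}\\ e^{-i\lambda} & 0\end{smallmatrix}\bigr)$ in that ordered basis. Diagonalising yields eigenvalues $\pm 1$ with orthonormal eigenvectors $v_{\pm}=\tfrac{1}{\sqrt{2}}\bigl(\ket{01\cdots 1}\pm e^{-i\lambda}\ket{10\cdots 0}\bigr)$. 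Symmetrically, $\Lambda=Z\otimes\sigma_{11}^{\otimes(n-1)}$ is supported on $W'=\mathrm{span}\{\ket{01\cdots 1}, \ket{11\cdots 1}\}$, where it acts as $\mathrm{diag}(+1,-1)$.

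The central computation applies the three factors of $U_n^{\Ci}(\lambda)$ in turn to $\ket{01\cdots 1}$ and to $\ket{11\cdots 1}$: the Hadamard on qubit $n$ produces the $\pm$ superposition $\tfrac{1}{\sqrt{2}}(\ket{0}\pm\ket{1})$; the phase gate attaches a factor $e^{\mp i\lambda}$ on the $\ket{1}$ branch; and the cascade of $\CT_k^n$ with $k\in \Ci$ flips each of qubits $1,\ldots,n-1$ exactly on the branch with qubit $n$ in $\ket{1}$, turning the trailing string $\ket{1\cdots 1}$ into $\ket{0\cdots 0}$. A short calculation then yields $U_n^{\Ci}(\lambda)\ket{01\cdots 1}=v_{+}$ and $U_n^{\Ci}(\lambda)\ket{11\cdots 1}=v_{-}$. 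Since $U_n^{\Ci}(\lambda)$ is unitary and bijectively sends $W'$ onto $W$, the operator $U_n^{\Ci}(\lambda)\Lambda(U_n^{\Ci}(\lambda))^{\dagger}$ vanishes on $W^{\perp}$ (because $\Lambda$ vanishes on $(W')^{\perp}$) and shares the spectral decomposition of $S$ on $W$; therefore it coincides with $S$ on the full Hilbert space.

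The main obstacle I anticipate is bookkeeping the sign conventions consistently: one must track carefully whether the phase gate in the definition of $U_n^{\Ci}(\mu)$ carries argument $\mu$ or $-\mu$, and whether the stated identity should be read as a genuine unitary conjugation $S=U_n^{\Ci}(\lambda)\Lambda(U_n^{\Ci}(\lambda))^{\dagger}$ (which is automatically Hermitian and therefore compatible with $S$) rather than the mixed form with unequal arguments on the two sides. If the two factors flanking $\Lambda$ are taken with inconsistent signs, the phase $e^{-i\lambda}$ appearing in $v_{\pm}$ is produced with the wrong sign, and the resulting operator fails to be Hermitian, hence fails to equal $S$. Once the sign conventions have been fixed, everything reduces to the verification on the two computational basis states $\ket{01\cdots 1}$ and $\ket{11\cdots 1}$ sketched above.
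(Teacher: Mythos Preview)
The paper does not actually prove this lemma; it is quoted from \cite{HZ24} (``we use the Bell basis to decompose a class of the matrix shown in \cite{HZ24}'') and then immediately applied. So there is no proof in the paper to compare against.

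Your argument is correct and complete: identifying the two-dimensional supports of $S$ and $\Lambda$, diagonalising each, and then checking that $U_n^{\Ci}(\lambda)$ carries the eigenbasis $\{\ket{01\cdots1},\ket{11\cdots1}\}$ of $\Lambda$ onto the eigenbasis $\{v_+,v_-\}$ of $S$ with matching eigenvalues is exactly the right approach, and your computation of $U_n^{\Ci}(\lambda)\ket{01\cdots1}=v_+$ and $U_n^{\Ci}(\lambda)\ket{11\cdots1}=v_-$ is accurate.

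Your instinct about the sign convention is also right and worth stating more forcefully. Taking the definition literally, $U_n^{\Ci}(\mu)=C\,P_n(-\mu)\,H_n$ with $C=\prod_{k\in\Ci}\CT_k^n$, so the displayed expression $U_n^{\Ci}(-\lambda)\,\Lambda\,(U_n^{\Ci}(\lambda))^\dagger$ equals $C\,P_n(\lambda)H_n\,\Lambda\,H_nP_n(\lambda)\,C$. Since $H_n\Lambda H_n=X\otimes\sigma_{11}^{\otimes(n-1)}$ and $P(\lambda)XP(\lambda)=e^{i\lambda}X$, this product is $e^{i\lambda}$ times a Hermitian operator, hence not Hermitian for $\lambda\neq 0$ and therefore not equal to $S$. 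The identity that actually holds (and that the later applications in the paper use, always with $\lambda=0$) is the genuine conjugation
\[
S \;=\; U_n^{\Ci}(\lambda)\,\Lambda\,\bigl(U_n^{\Ci}(\lambda)\bigr)^{\dagger},
\]
which is precisely what your spectral argument establishes.
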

	To present the explicit quantum circuit for  $\Cu_3$, we first decompose $H_{\text{curl}}$ into
	\begin{equation}\label{eq:Hcurl_HxHyHz}
		\begin{aligned}
			H_{\text{curl}} &=i(\sigma_{00}\otimes \sigma_{01}\otimes M_{\nabla_h\times}^E +
			\sigma_{00}\otimes\sigma_{10}\otimes M_{\nabla_h\times}^B)\otimes \onebf^{\otimes (n_p+n_s)} \\
			&= (H_x+H_y+H_z)\otimes \onebf^{\otimes(n_s+n_p)},
		\end{aligned}
	\end{equation}
	where the matrices $H_x$, $H_y$ and $H_z$ are defined as follows
	\begin{align*}
		H_x &= \big(\sigma_{00}\otimes \sigma_{01}\otimes Y\otimes X \otimes \BD_x^{+} 
		-\sigma_{00}\otimes \sigma_{10}\otimes Y \otimes X \otimes \BD_x^{-}
		\big), \\
		H_y & = -\big(\sigma_{00}\otimes \sigma_{01}\otimes Y\otimes Z \otimes \BD_y^{+}
		-\sigma_{00}\otimes \sigma_{10}\otimes Y \otimes Z \otimes \BD_y^{-}
		\big)  , \\
		H_z & =
		\big(\sigma_{00}\otimes \sigma_{01} \otimes \onebf \otimes Y \otimes \BD_z^{+} 
		-\sigma_{00}\otimes \sigma_{10}\otimes\onebf \otimes Y \otimes \BD_z^{-}
		\big)  . 
	\end{align*}
	The circuit for $\Cu_3$ is approximated by $\Cu_3\approx (\Cu_x\Cu_y\Cu_z)\otimes \onebf^{\otimes (n_s+n_p)}$ with
	\begin{equation*}
		\Cu_x = \exp(-i H_x \tau), \quad \Cu_y = \exp(-i H_y \tau),\quad \Cu_z = \exp(-i H_z \tau).
	\end{equation*}
	From Equations \eqref{eq:sn} and \eqref{eq:sp}, we find that $H_x=H_{x_1} + H_{x_2}$, where $H_{x_1}$ and $H_{x_2}$ can be defined by
	\begin{align*}
		H_{x_1}& = 
		-\big(\frac{1}{\triangle x}\sigma_{00}\otimes \sigma_{01}\otimes Y\otimes X 
		\otimes \sum_{j=1}^m \onebf^{\otimes (3m-j)} 
		\otimes \sigma_{10}\otimes \sigma_{01}^{\otimes (j-1)} \notag \\
		&\hspace{5mm}+
		\frac{1}{\triangle x}\sigma_{00}\otimes \sigma_{10}\otimes Y\otimes X 
		\otimes \sum_{j=1}^m \onebf^{\otimes (3m-j)} 
		\otimes \sigma_{01}\otimes \sigma_{10}^{\otimes (j-1)}\big), \\
		H_{x_2} &=  
		\frac{1}{\triangle x}\sigma_{00}\otimes X \otimes Y\otimes X \otimes \onebf^{\otimes 2m}\otimes  \BI^r.
	\end{align*}
	We can express the tensor product of the Pauli matrices by
	\begin{align}
		Z\otimes Y\otimes X 
		&=  (\onebf\otimes U_Y\otimes H)\cdot Z_{\sigma,3}\cdot (\onebf\otimes U_Y^{\dagger}\otimes H), \label{eq:ZYZ}\\
		X\otimes Y \otimes X
		&=(H\otimes U_Y\otimes H)\cdot Z_{\sigma,3}\cdot  (H\otimes U_Y^{\dagger}\otimes H). \label{eq:XYX}
	\end{align}
	Here $U_Y$ is defined as $H P(\frac{\pi}{2})H$ and $Z_{\sigma,3}$ represents $Z^{\otimes 3}$.
	Applying Lemma~\ref{lem:UZU} yields
	\begin{equation*}
		\begin{aligned}
			H_{x_1} 
			&=  \tilde U_{ZYX}\cdot \sum_{j=1}^m\left( U_{3m+3}^{\Ci_{x,j}}(0)X_{3m+3,j}(\szz\otimes\frac{-Z_{\sigma,3}}{\triangle x}\otimes \onebf^{\otimes (3m-j)}\otimes \soo^{\otimes j})
			X_{3m+3,j}(U_{3m+3}^{\Ci_{x,j}}(0))^{\dagger}\right) \tilde U_{ZYX}^{\dagger}.
		\end{aligned}
	\end{equation*}
	In this expression, $\tilde U_{ZYX}$ is defined as $\onebf^{\otimes 2}\otimes U_Y\otimes H\otimes \onebf^{\otimes 3m}$. The term $U_{3m+3}^{\Ci_{x,j}}(\lambda)$ is given by 
	$(\prod\limits_{k\in \Ci_{x,j}}\CT_{k}^{3m+3}) P_{3m+3}(\lambda) H_{3m+3}$.
	The set $\Ci_{x,j}$ represents $\{1,2,\cdots,j\}$, 
	and $X_{3m+3,j}$ denotes the X gate acting on the $(3m+3)$-th and $j$-th qubit.
	Using the first-order Lie-Trotter-Suzuki decomposition, it yields
 \begin{equation*}
		\begin{aligned}
      \Cu_{x_1} & =
			\exp(-iH_{x_1}\tau)\\
   & \approx \tilde U_{ZYX}\cdot \prod_{j=1}^m 
   \left(U_{3m+3}^{\Ci_x,j}(0) X_{3m+3,j}
   \CRZ_{{3m+1\to3m+3}}^{\Ci_{x,j}^1,\Ci_{x,j}^0}(\frac{-2\tau}{\triangle x})  X_{3m+3,j} (U_{3m+3}^{\Ci_x,j}(0))^{\dagger}\right) \cdot 
			\tilde U_{ZYX}^{\dagger}\\
			& = 
			\tilde U_{ZYX} \prod_{j=1}^m W_j \tilde U_{ZYX}^{\dagger}
			 = \Cv_{x_1}.
		\end{aligned}
	\end{equation*}
   Here,  $\CRZ_{{3m+1\to 3m+3}}^{\Ci_{x,j}^1,\Ci_{x,j}^0}(2\theta)$ represents 
   a multi-gate $\text{RZ}_3(2\theta) = \exp(-i \theta Z_{\sigma,3})$, acting on the $(3m+1)$ to $(3m+3)$-th qubits controlled by $\Ci_{x,j}^1$ and $\Ci_{x,j}^0$ given by 
   \begin{align*}
       \Ci_{x,j}^1=\{1,2,\cdots,j\},\quad \Ci_{x,j}^0 =\{3m+4\}.
   \end{align*}
	The circuits of $W_j$ and $\Cv_{x_1}$ are shown in Figure~\ref{fig:cirUx1}.
	\begin{figure}[t]
		\centering
		\subfigure[$W_j$]{
			\includegraphics[width=0.48\linewidth]{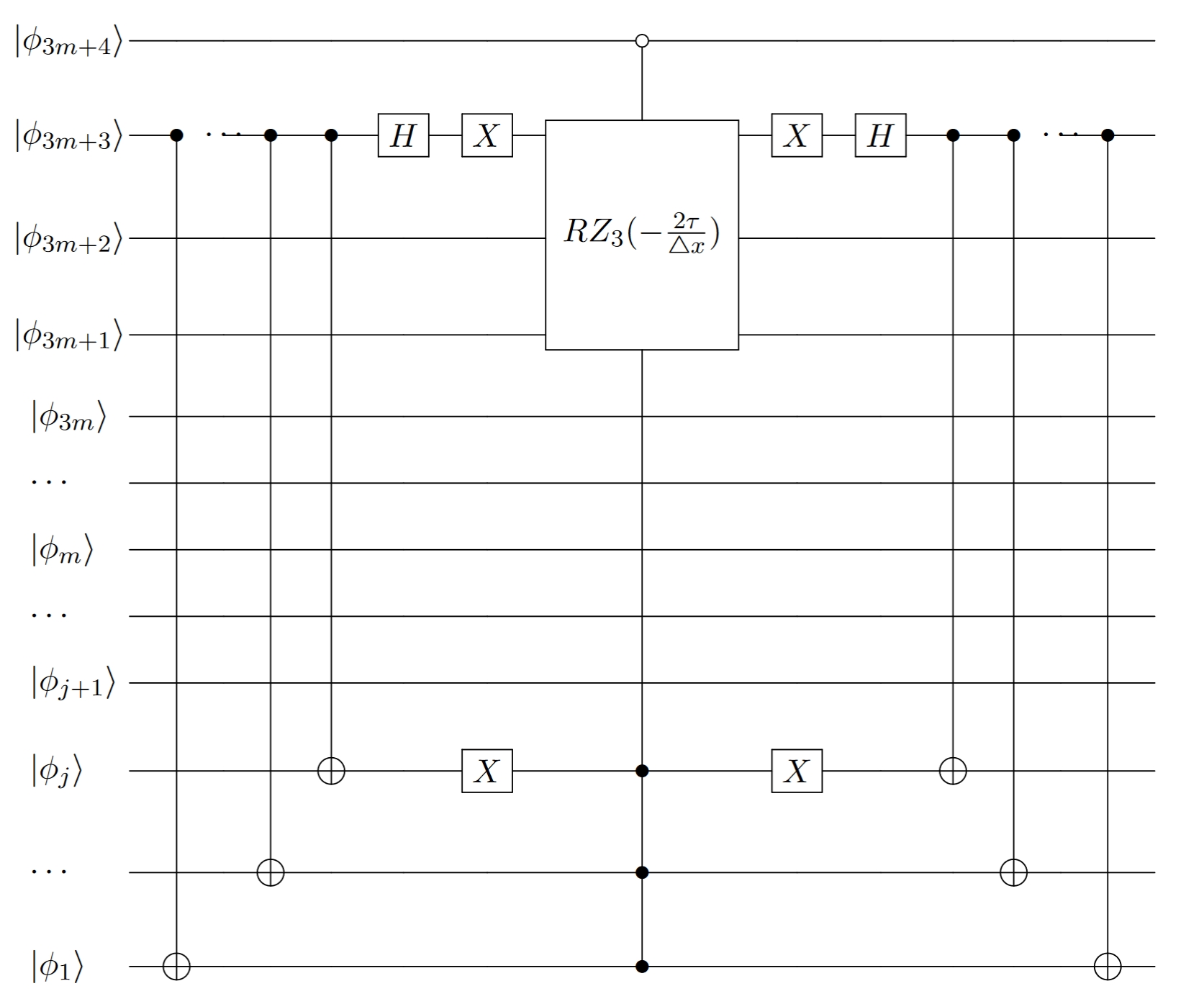}\label{fig:VI}}
		\subfigure[$\Cv_{x_1}$]{
			\includegraphics[width=0.49\linewidth]{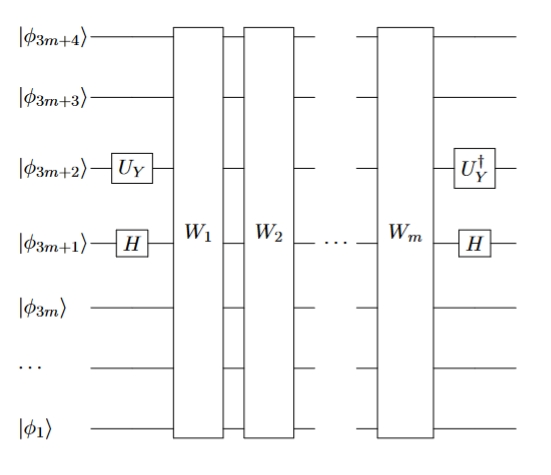}\label{fig:VII}}
		\caption{Quantum circuit for $W_j$ and $\Cv_{x_1}$.}
		\label{fig:cirUx1}
	\end{figure}
Noting that $\BI^r = \onebf^{\otimes m} - \sigma_{00}^{\otimes m}$ and Equation \eqref{eq:XYX}, one has  
	\begin{equation*}
      \begin{aligned}
      \Cu_{x_2} &= \exp(-i H_{x_2}\tau)\\
        &=  \tilde U_{XYX} \cdot\CRZ_{\sigma_{3m+1\to 3m+3}}^{\emptyset,\{3m+4\}}(\frac{-2\tau}{\triangle x})\cdot \CRZ_{\sigma_{3m+1\to 3m+3}}^{\emptyset,\{1,2,\cdots,m,3m+4\}}(\frac{2\tau}{\triangle x}) \cdot \tilde U_{XYX}^{\dagger},
        \end{aligned}
	\end{equation*}
     where $\tilde U_{XYX} = \onebf \otimes H\otimes U_Y \otimes H \otimes \onebf^{\otimes 3m}$.
	The circuit of $\Cu_{x_2}$ is shown in Figure \ref{fig:Ux2}, and the approximation of $\Cu_x$ is given by
 \begin{equation}\label{eq:Vx}
		\Cu_x =\Cu_{x_1}\Cu_{x_2}\approx \Cv_x = \Cv_{x_1}\cdot \Cu_{x_2}.
	\end{equation}
	\begin{figure}
    \centerline{
     \Qcircuit @C=1em @R=2em {
					\lstick{\hbox to 3em{$\ket{q_{3m+4}}$\hss}}    &\qw   &\qw       & \ctrlo{1}  & \ctrlo{1}     &\qw   &\qw & \qw        \\
					\lstick{\hbox to 3em{$\ket{q_{3m+3}}$\hss}}    &\qw   & \gate{H}  &\multigate{2}{RZ_3(\frac{-2\tau}{\triangle x})}  &\multigate{2}{RZ_3(\frac{2\tau}{\triangle x})}   &\gate{H} &\qw   &\qw   \\
					\lstick{\hbox to 3em{$\ket{q_{3m+2}}$\hss}}     &\qw  &\gate{U_Y} & \ghost{RZ_3(\frac{-2\tau}{\triangle x})}  & \ghost{RZ_3(\frac{2\tau}{\triangle x})}   &\gate{U_Y^{\dagger}} &\qw &\qw   \\
					\lstick{\hbox to 3em{$\ket{q_{3m+1}}$\hss}}    &\qw  &\gate{H}     & \ghost{RZ_3(\frac{-2\tau}{\triangle x})} & \ghost{RZ_3(\frac{2\tau}{\triangle x})}     & \gate{H}&\qw  &\qw \\
					\lstick{\hbox to 2.5em{$\ket{q_{3m}}$\hss}}   &\qw  &\qw  & \qw  & \qw    & \qw   & \qw& \qw \\
					\lstick{\hbox to 2.5em{$\cdots$\hss}}             &\qw  &\qw & \qw  & \qw    & \qw  & \qw & \qw\\ 
					\lstick{\hbox to 2.5em{$\ket{q_{m}}$\hss}}      &\qw  &\qw &\qw  & \ctrlo{-3}          & \qw     & \qw & \qw\\
					\lstick{\hbox to 2.5em{$\cdots$\hss}}            &\qw  & \qw  & \qw & \ctrlo{-1} & \qw& \qw & \qw \\
					\lstick{\hbox to 2.5em{$\ket{q_1}$\hss}}       &\qw &\qw &\qw & \ctrlo{-1}    & \qw& \qw& \qw 
     }
     }
     \caption{Quantum circuit for $\Cu_{x_2}$.}
		\label{fig:Ux2}
	\end{figure}
	
	An argument similar to the computation of $\Cu_x$ shows the approximation of $\Cu_y$ as 
 	\begin{align}
		\Cu_y  &\approx \Cv_y = 
		\Cv_{y_1}\cdot\Cu_{y_2}, \label{eq:Vy}\\
		\Cv_{y_1} & = \tilde U_{ZYZ}\cdot \prod_{j=1}^m
        \left(U_{3m+3}^{\Ci_{y,j}}(0) X_{3m+3,j}
		\CRZ_{{3m+1\to3m+3}}^{\Ci_{y,j}^{1},\Ci_{y,j}^0}(\frac{2\tau}{\triangle y}) X_{3m+3,j} (U_{3m+3}^{\Ci_y,j}(0))^{\dagger} \right)\cdot 
		\tilde U_{ZYZ}^{\dagger}, \notag \\
		\Cu_{y_2} &= 
		\tilde U_{XYZ} \cdot\CRZ_{{3m+1\to 3m+3}}^{\emptyset, \{3m+4\}}(\frac{2\tau}{\triangle y}) \cdot \CRZ_{{3m+1\to 3m+3}}^{\emptyset, \{m+1,\cdots,2m,3m+4\}}(\frac{-2\tau}{\triangle y}) \cdot \tilde U_{XYZ}^{\dagger}, \notag
	\end{align}
	where $\Ci_{y,j} =\{m+1,m+2,\cdots,m+j\}$,  and 
	\begin{align*}
	\Ci_{y,j}^0=\{3m+4\}, \quad \Ci_{y,j}^1=\{m+1,m+2,\cdots,m+j\}.
	\end{align*}
	Here $\tilde U_{ZYZ} = \onebf^{\otimes 2} \otimes  U_Y \otimes \onebf^{\otimes (3m+1)}$,
	and
	$\tilde U_{XYZ} = \onebf\otimes H \otimes U_Y \otimes \onebf^{\otimes(3m+1)}$. 
	Similarly, one has the approximation of circuit of $\Cu_z$ shown as 
	\begin{align}
		\Cu_z  &\approx \Cv_z = \Cv_{z_1}\cdot \Cu_{z_2},\label{eq:Vz}\\
		\Cv_{z_1} & =  \tilde U_{Z\onebf Y}\cdot \prod_{j=1}^m \left( U_{3m+3}^{\Ci_{z,j}}(0) X_{3m+3,j}
		\CRZ_{{3m+1\to 3m+3}}^{\Ci_{z,j}^{1},\Ci_{z,j}^0}(\frac{-2\tau}{\triangle z}) X_{3m+3,j} (U_{3m+3}^{\Ci_{z,j}}(0))^{\dagger} \right)\cdot 
		\tilde U_{Z\onebf Y}^{\dagger}, \notag \\
		\Cu_{z_2} &= 
		\tilde U_{X\onebf Y}\cdot  \CRZ_{{3m+1\to 3m+3}}^{\emptyset, \{3m+4\}}(\frac{-2\tau}{\triangle z}) \cdot \CRZ_{{3m+1\to 3m+3}}^{\emptyset, \{2m+1,\cdots,3m,3m+4\}}(\frac{2\tau}{\triangle z}) \cdot 
		\tilde U_{ X\onebf Y}^{\dagger}, \notag
	\end{align}
	where $\Ci_{z,j} =\{2m+1,2m+2,\cdots,2m+j\}$ and 
	\begin{align*}
	\Ci_{z,j}^0=\{3m+4\}, \quad \Ci_{z,j}^1=\{2m+1,2m+2,\cdots,2m+j\}.
	\end{align*}
	Here $\tilde U_{Z\onebf Y} = \onebf^{\otimes3}\otimes U_Y \otimes \onebf^{\otimes 3m}$ and 
	$\tilde U_{X\onebf Y} = H\otimes \onebf \otimes U_Y\otimes \onebf^{\otimes 3m}$.
	
	In summary, the approximation of the time evolution $\exp(-i H\tau)$ with $H$ defined in 
	\eqref{eq:schro of tilde w} is given by
	\begin{equation}\label{eq:unitary}
		V(\tau) =\Cu_1\Cv_2\left(\Cv_x\Cv_y\Cv_z\otimes \onebf^{\otimes(n_s+n_p)}\right)= \Cu_1\Cv_2\prod_{\alpha=x,y,z}\big(\Cv_{\alpha_1}\,\Cu_{\alpha_2}\big)\otimes \onebf^{\otimes (n_s+n_p)},
	\end{equation}
	where $\Cu_1$ is defined in \eqref{eq:u1_t}-\eqref{eq:u1_2t}, $\Cv_2$ is defined in \eqref{eq:V2}, and $\Cv_x$, $\Cv_y$ and $\Cv_z$ are defined in \eqref{eq:Vx},
	\eqref{eq:Vy} and \eqref{eq:Vz}, respectively.
 
	\section{Complexity analysis}
 
	In this section, we analyze the complexity of the previously constructed quantum circuits. We begin by demonstrating the approximation of the circuit implementation for $\exp(i\Ch\tau )$ with a time step $\tau$, where 
	\begin{equation}\label{eq:def H DM}
		\Ch= \gamma \sum_{\alpha=1}^d\sum_{j=1}^n \eta_{\alpha} (e^{i\lambda_{\alpha}} (s_j^{-})_{\alpha} + e^{-i\lambda_{\alpha}} (s_j^+)_{\alpha})
	\end{equation} 
	as proved in \cite{SKHOY24}.
	Here $\eta_{\alpha}$ and $\gamma$ are  real scalar parameters, while $\lambda_{\alpha}\in \bbR$ is the phase parameter and 
	\begin{equation*}
		(s_j^{\mu})_{\alpha} = \onebf^{\otimes (\alpha-1)n} \otimes s_j^{\mu} \otimes \onebf^{\otimes (d-\alpha) n},
	\end{equation*}  
	for $\mu\in \{+,-\}$.
 
\begin{lemma}\label{lem:err of s}
		The time evolution operator  $\exp(-i \Ch \tau)$, with $\Ch$ defined in \eqref{eq:def H DM},  can be approximated by the unitary
		\begin{equation*}
			\bigotimes_{\alpha=1}^d V(\gamma \eta_{\alpha}\tau, \lambda_{\alpha}) =
			\prod_{\alpha=1}^d \onebf^{\otimes(\alpha-1)n}\otimes \left( \prod_{j=1}^n \onebf^{\otimes (n-j)}\otimes U_n^{\Ci_j}(\lambda) \CRZ_j^{\Ci_j^1,\emptyset}(2\gamma \tau) (U_n^{\Ci_j}(-\lambda))^{\dagger}
            \right) \otimes \onebf^{\otimes (d-\alpha)n},
		\end{equation*}
		where $\Ci_j = \Ci_j^1 =\{1,2,\cdots,j-1\}$, $\CRZ_j^{\Ci_j^1,\emptyset}(2\gamma \tau) = \exp(-i \gamma \tau Z\otimes \soo^{\otimes (j-1) })$ is a multi-controlled RZ($2\gamma \tau$) gate acting on the $j$-th qubit when the $1,\cdots, (j-1)$-th qubits become $1$. 
		The approximation error is upper bounded in the sense of the operator norm as  
		\begin{equation*}
			\|\exp(-i\Ch\tau ) - \bigotimes_{\alpha=1}^d V(\gamma\eta_{\alpha}\tau,\lambda_{\alpha})\| \leq \frac{\gamma^2 \tau^2 (n-1) }{2}\sum_{\alpha=1}^d \eta_{\alpha}^2.
		\end{equation*}
	\end{lemma}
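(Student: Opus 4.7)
The plan is to combine three ingredients: the fact that the $\alpha$-indexed tensor blocks commute so that the outer product decomposes exactly; a first-order Lie--Trotter--Suzuki splitting of the $n$ shift terms inside each fixed $\alpha$; and Lemma~\ref{lem:UZU}, which diagonalises each single-$j$ generator exactly via conjugation by $U_n^{\Ci_j}(\lambda)$, turning it into a multi-controlled $\CRZ$ on the first $j$ qubits of the block.

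Concretely, let $H_\alpha := \gamma\eta_\alpha\sum_{j=1}^n\big(e^{i\lambda_\alpha}(s_j^-)_\alpha + e^{-i\lambda_\alpha}(s_j^+)_\alpha\big)$ so that $\Ch = \sum_\alpha H_\alpha$. By the definition of $(s_j^\mu)_\alpha$, distinct $H_\alpha$ act on disjoint $n$-qubit blocks, so $[H_\alpha,H_{\alpha'}]=0$ and $\exp(-i\Ch\tau) = \prod_{\alpha=1}^d \exp(-i\tau H_\alpha)$, which already matches the outer tensor structure on the right-hand side with zero error. For each fixed $\alpha$, writing $H_{j,\alpha}$ for the $j$-th summand in $H_\alpha$ and applying Lemma~\ref{lem:UZU} on the first $j$ qubits of the corresponding block identifies $\exp(-i\tau H_{j,\alpha})$ exactly with the conjugated $\CRZ$ factor appearing in $V(\gamma\eta_\alpha\tau,\lambda_\alpha)$. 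A first-order Lie--Trotter--Suzuki split then produces the product $\prod_{j=1}^n \exp(-i\tau H_{j,\alpha})$ up to a splitting remainder $E_\alpha$.

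The remaining task is to estimate $\|E_\alpha\|$. The standard telescoping bound gives $\|E_\alpha\|\leq \frac{\tau^2}{2}\sum_{k=1}^{n-1}\big\|\big[H_{k,\alpha},\sum_{j>k}H_{j,\alpha}\big]\big\|$. The crucial structural observation, and the main obstacle, is that $\{s_j^\pm\}_{j=1}^n$ form an orthogonal family of partial isometries: $(s_j^\mu)^\dagger s_k^\mu = 0 = s_j^\mu(s_k^\mu)^\dagger$ whenever $j\neq k$. Consequently the partial sums $\sum_{j>k}s_j^\pm$ have operator norm at most $1$ uniformly in $k$, so that each commutator in the telescoping sum is bounded by a constant times $\gamma^2\eta_\alpha^2$ independent of $n$. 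A naive pairwise bound applied to each of the $\binom{n}{2}$ commutators would instead inflate the error to order $n^2$, so this orthogonality is essential to the sharp $(n-1)$ scaling. Collapsing the $n-1$ telescoping commutators and summing over the commuting $\alpha$-blocks then delivers the stated bound, matching the estimate carried out in \cite{SKHOY24}.
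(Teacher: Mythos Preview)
The paper does not supply its own proof of this lemma; it is quoted directly from \cite{SKHOY24}. Your three-step strategy---exact tensor factorisation over $\alpha$ (since the $(s_j^\mu)_\alpha$ act on disjoint $n$-qubit blocks), exact diagonalisation of each $j$-term via Lemma~\ref{lem:UZU} into the conjugated $\CRZ$ factor, and a first-order Trotter splitting over $j$ within each block---is exactly the expected route and is correct.

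One point to tighten. Your telescoping estimate, combined with the crude bound $\|[A,B]\|\le 2\|A\|\,\|B\|$ and the observation $\|\sum_{j>k}h_j\|\le 1$, yields $\|E_\alpha\|\le (n-1)\gamma^2\eta_\alpha^2\tau^2$, which is a factor of two larger than the stated constant $\tfrac12(n-1)$. To recover the sharp constant you need a finer structural fact: writing $h_j:=e^{i\lambda}s_j^{-}+e^{-i\lambda}s_j^{+}$, one checks from the supports that $[h_j,h_k]=0$ whenever $j,k\ge 2$, so the only nonzero pairwise commutators are $[h_1,h_k]$ for $k\ge 2$, and a direct computation gives $\|[h_1,h_k]\|=1$ (it equals $(A-A^\dagger)\otimes Z$ for a nilpotent partial isometry $A$ with $A^\dagger A\perp AA^\dagger$). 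Feeding this into the pairwise Trotter bound $\tfrac{\tau^2}{2}\sum_{j<k}\|[H_{j,\alpha},H_{k,\alpha}]\|$ then gives exactly $\tfrac{\tau^2}{2}(n-1)\gamma^2\eta_\alpha^2$ per block and hence the claimed total. Your orthogonal-partial-isometry observation is the right lever, but it should be pushed to the commutators themselves, not only to the summed generators.
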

 
\begin{lemma}\label{lem:Utau-Vtau}
		Consider the Schr\"odingerized  equation $\frac{\D}{\D t} \ket{\tilde\Bv_h} = -i H \ket{ \tilde \Bv_h}$ with the Hamiltonian $H$ given in  \eqref{eq:schro of tilde w}.
		The time evolution operator $\exp(-iH\tau)$ with the time step $\tau$ can be approximated by the unitary $V(\tau)$ in \eqref{eq:unitary}. The approximation error in the sense of the operator norm is upper bounded as 
		\begin{equation}
			\|\exp(-i H \tau) - V(\tau)\| \lesssim  \frac{d\tau^2 (m-1)}{\triangle x^2} 
			+ \frac{\tau^2 F_{\max}}{\triangle p \triangle x}
			+\frac{\tau^2 F_{\max}}{\triangle p \triangle s}
			+\frac{\tau^2}{\triangle s \triangle x}
            +\frac{(d+1)|\Ci||\Ci_s|F_{\max}^2\tau^2}{\triangle p},
		\end{equation}
		where $d=3$ is the dimension, $|\Ci| =\max\limits_{\alpha=x,y,z,\rho,l\in \Ci_s}|\Ci_{\alpha}^l|$, $F_{\max} =\|\BF\|_{\max} $.
	\end{lemma}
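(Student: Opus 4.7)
The plan is to bound $\|e^{-iH\tau} - V(\tau)\|$ by peeling off, via the triangle inequality, the three layers of approximation that produce $V(\tau)$ from $e^{-iH\tau}$. Concretely, I write $\|e^{-iH\tau} - V(\tau)\| \leq E_{\text{Tr}} + E_F + E_{\text{curl}}$, where $E_{\text{Tr}} = \|e^{-iH\tau} - \Cu_1(\tau)\Cu_2(\tau)\Cu_3(\tau)\|$ is the first-order Lie--Trotter--Suzuki error from the three-way split in \eqref{eq:U1U2U3}, $E_F = \|\Cu_2(\tau) - \Cv_2(\tau)\|$ is the internal Trotter error for $H_F$ in \eqref{eq:V2}, and $E_{\text{curl}} = \|\Cu_3(\tau) - (\Cv_x\Cv_y\Cv_z)\otimes \onebf^{\otimes(n_s+n_p)}\|$ captures the curl-sector approximations from \eqref{eq:Hcurl_HxHyHz}, \eqref{eq:Vx}, \eqref{eq:Vy}, \eqref{eq:Vz} combined with Lemma~\ref{lem:err of s}. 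Each factor of $V(\tau)$ is unitary, so left-multiplication by unitaries is norm-preserving and telescoped differences collapse cleanly into these three pieces.

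For $E_{\text{Tr}}$ I use the standard bound $\|e^{-i(A+B)\tau} - e^{-iA\tau}e^{-iB\tau}\| \leq \tau^2 \|A\|\|B\|$ iterated to the three factors $H_{Ds}, H_F, H_{\text{curl}}$, which yields three cross terms. The required operator norms, read from \eqref{eq:H}--\eqref{eq:Hcurl}, are $\|H_{Ds}\| \leq \|P_s\| = \max_l|\nu_l^s| \lesssim 1/\triangle s$, $\|H_F\| \lesssim \|\BF\|_{\max}\|D_p\| \lesssim F_{\max}/\triangle p$, and $\|H_{\text{curl}}\| \lesssim d/\triangle x$ since each $\BD_\alpha^{\pm}$ has norm $O(1/\triangle x)$. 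Multiplying pairs of these norms produces exactly the three cross terms $\tau^2/(\triangle s\triangle x)$, $\tau^2 F_{\max}/(\triangle p\triangle s)$, $\tau^2 F_{\max}/(\triangle p\triangle x)$ in the target inequality. For $E_F$ I apply the same Trotter estimate iteratively to the sum representation \eqref{eq:H_F} of $H_F$, which contains at most $(d+1)|\Ci_s||\Ci|$ summands indexed by $\alpha\in\{x,y,z,\rho\}$, $l\in \Ci_s$, $j\in \Ci_\alpha^l$. Each summand has operator norm bounded by $F_{\max}/\triangle p$ (the $H_{X_p}$ terms dominate because of the $D_p$ factor), and the sum of pairwise errors $\tau^2\|\cdot\|\|\cdot\|$ gives the term $(d+1)|\Ci||\Ci_s| F_{\max}^2\tau^2/\triangle p$.

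For $E_{\text{curl}}$ I split $H_{\text{curl}} = (H_x+H_y+H_z)\otimes \onebf^{\otimes(n_s+n_p)}$ and Trotter across the three Cartesian directions. Inside each direction I further split $H_\alpha = H_{\alpha_1} + H_{\alpha_2}$ and employ the Pauli-tensor diagonalizations \eqref{eq:ZYZ}--\eqref{eq:XYX}; since conjugation by the unitaries $\tilde U_{ZYX}$, $\tilde U_{ZYZ}$, $\tilde U_{Z\onebf Y}$ and their $X$-analogues is norm-preserving, the residual error comes entirely from matching $H_{\alpha_1}$ to the template \eqref{eq:def H DM} with $\gamma = 1/\triangle x$, $\eta_\alpha = 1$ and $n=m$ qubits per direction, and then applying Lemma~\ref{lem:err of s} to obtain $\|\Cu_{\alpha_1} - \Cv_{\alpha_1}\| \leq \tau^2(m-1)/(2\triangle x^2)$ for each $\alpha\in\{x,y,z\}$. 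The $\Cu_{\alpha_2}$ pieces are implemented exactly (see Figure~\ref{fig:Ux2}), so they contribute only sub-dominant intra-direction Trotter errors of order $\tau^2/\triangle x^2$, and summing over the three directions absorbs everything into the first target term $d\tau^2(m-1)/\triangle x^2$.

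The main obstacle is the bookkeeping inside $E_{\text{curl}}$: I must verify that after conjugating by $\tilde U_{ZYX}$ and its siblings, each summand of $H_{x_1}, H_{y_1}, H_{z_1}$ indeed takes the shift-operator form $\gamma\sum_\alpha\sum_j \eta_\alpha(e^{i\lambda_\alpha}(s_j^-)_\alpha + e^{-i\lambda_\alpha}(s_j^+)_\alpha)$ with the correct phase and direction, so that Lemma~\ref{lem:err of s} is invoked with parameters that reproduce the $d\tau^2(m-1)/\triangle x^2$ constant rather than a larger one. A secondary check is that $|\Ci|$ (rather than $m$) is the right count in $E_F$ because only $|\Ci_\alpha^l|$ indices $j$ carry genuine nonzero contributions per $(l,\alpha)$ under Assumption~\ref{ass:J}. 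Once these identifications are in place, the stated bound follows by summing $E_{\text{Tr}}$, $E_F$ and $E_{\text{curl}}$.
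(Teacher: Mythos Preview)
Your proposal is correct and follows essentially the same route as the paper's proof: the same three-way telescoping into $E_{\text{Tr}}$, $E_F$, and $E_{\text{curl}}$, the same norm estimates $\|H_{Ds}\|\lesssim 1/\triangle s$, $\|H_F\|\lesssim F_{\max}/\triangle p$, $\|H_{\text{curl}}\|\lesssim 1/\triangle x$ for the outer Trotter error, and the same appeal to Lemma~\ref{lem:err of s} for the curl sector. The only cosmetic difference is that the paper states the outer Trotter error via commutators $\|[H_{D_s}+H_F,H_{\text{curl}}]\|$ and $\|[H_{D_s},H_F]\|$ before bounding those by norm products, whereas you go directly to the products; and for $E_F$ the paper (implicitly) uses the orthogonality of the projectors $\Theta_l$ and $\sigma_{3,\alpha}$ to reduce to $O((d+1)|\Ci||\Ci_s|)$ nonvanishing commutators, each of size $\lesssim F_{\max}\cdot F_{\max}/\triangle p$ (one $D_p$ factor, one identity factor), which is exactly the count you assert---you should make that orthogonality explicit so the linear (not quadratic) dependence on $(d+1)|\Ci||\Ci_s|$ and the single $1/\triangle p$ are justified.
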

	\begin{proof}
		Using Trotter's splitting and the triangle inequality, it yields
		\begin{align}
			&\|\exp(-iH\tau) -V(\tau)\| 
			 \notag \\
			\lesssim  &\tau^2\big(\|[H_{D_s}+H_F,H_{\text{curl}}]\| 
			+\|[H_{D_s},H_F]\|
			\big)
            +\|\Cu_2 - \Cv_2\|
			+\|\Cu_3-(\Cv_x\Cv_y\Cv_z)\otimes \onebf^{\otimes (n_s+n_p)}\|.
			\label{eq:expiH-V}
		\end{align}
		In the following, we consider the approximation of the right-hand sides of  \eqref{eq:expiH-V} term by term.
		
		Under the assumption that $\triangle x = \triangle y = \triangle z$, one has
		\begin{align} 
			\|[H_{D_s}+H_F,H_{\text{curl}}]\|
			&\lesssim \|H_{D_s}+H_F\|\|H_{\text{curl}}\| \lesssim \frac{F_{\max}}{\triangle p \triangle x} + \frac{1}{\triangle s \triangle x}, \label{eq:commutor Hxy Hcurl}\\
			\|[H_{D_s},H_F]\| &\lesssim \|H_{D_s}\|\|H_F\|\lesssim \frac{F_{\max}}{\triangle s \triangle p}.
			\label{eq:commutor Hxy}
		\end{align}
        The error for $\|\Cu_2 - \Cv_2\|$ is bounded by 
        \begin{equation}
        \begin{aligned}
            \|\Cu_2 - \Cv_2\| &\lesssim \sum_{\substack{\alpha=x,y,z,\rho\\l\in \Ci_s j\in \Ci_{\alpha}^l }}\big(\|[H_{X_p,l_j}^{\alpha},H_{Y_p,l_j}^{\alpha}]\|
            +\|[H_{X_p,l_j}^{\alpha},H_{Y_p,l_j}^{\alpha}]\|
            +\|[H_{X_p,l}^{\alpha},H_{Y_p,l_j}^{\alpha}]\|
            +\|[H_{X_p,l}^{\alpha},H_{Y_p,l}^{\alpha}]\|
            \big)\\
            &\lesssim (d+1)|\Ci||\Ci_s|
            \frac{F_{\max}^2}{\triangle p},
            \end{aligned}
        \end{equation}
     where $|\Ci| = \max\limits_{\alpha=x,y,z,\rho} \max\limits_{l\in \Ci_s}|\Ci^{l}_{\alpha}|$.
		This leads from Lemma~\ref{lem:err of s} to the estimation of the last term in Equation \eqref{eq:expiH-V} as follows
        \begin{align}
			\|\Cu_3 - (\Cv_x\Cv_y\Cv_z)\otimes \onebf^{\otimes (n_s+n_p)}\|
			&\leq \|\Cu_3-(\Cu_x\Cu_y\Cu_z)\otimes \onebf^{\otimes (n_s+n_p)}\|+\|\Cu_x-\Cv_x\|+\|\Cu_y-\Cv_y\|
			+\|\Cu_z-\Cv_z\| \notag \\
			&\lesssim \tau^2(\|[H_x+H_y,H_z]\| + \|[H_x,H_y]\|)+\sum_{\alpha=x,y,z}\|\Cu_{\alpha_1}-\Cv_{\alpha_1}\| \notag \\
			&\lesssim \frac{\tau^2}{\triangle x^2} + \frac{d\tau^2 (m-1)}{\triangle x^2}. \label{eq:Cu3}
		\end{align}
		Inserting  \eqref{eq:commutor Hxy Hcurl}--\eqref{eq:Cu3} into \eqref{eq:expiH-V}, the proof is completed.
	\end{proof}
 
	\begin{lemma}\label{lem:U-V}
		The approximation of $\exp(-iH\tau)$ denoted by 
		$V(\tau)$, where $H$ is the Hamiltonian defined 
		in \eqref{eq:schro of tilde w} and $\tau$ is the time step, can be implemented by $\mathscr{O}\left(dm+n_s+(d+1)|\Ci||\Ci_{s}|n_p\right)$ single-qubit gates and 
		at most $\mathscr{O}\left(dm^2+n_s^2+(d+1)|\Ci||\Ci_{s}|n_p(n_s+md)\right)$ CNOT gates,
		where $|\Ci| = \max\limits_{\alpha=x,y,z,\rho} \max\limits_{l\in \Ci_s}|\Ci^{l}_{\alpha}|$ and 
		$n_p,n_s,m\geq d=3$.
	\end{lemma}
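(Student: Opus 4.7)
The plan is to decompose $V(\tau)$ as written in \eqref{eq:unitary} into its four constituent blocks --- $\Cu_1$, $\Cv_2$, $\Cv_{\alpha_1}$, and $\Cu_{\alpha_2}$ for $\alpha\in\{x,y,z\}$ --- and count single-qubit and CNOT gates block by block, then sum. Throughout I would use the standard fact that a multi-controlled single-qubit rotation with $k$ controls can be implemented with $\mathscr{O}(k)$ CNOT gates and $\mathscr{O}(1)$ base rotation gates (treating the intermediate conjunction computation as CNOT cost), so that the rotation angle contributes to the single-qubit count while all fan-in/fan-out wiring contributes to the CNOT count.

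First I would handle $\Cu_1(\tau)$ from \eqref{eq:u1_t}--\eqref{eq:u1_2t}: this is a QFT on $n_s$ qubits, a layer of $n_s$ diagonal phase gates together with the global-phase gate $I_{\tau N_s/(2S)}$ (see Figure~\ref{tilde U1}), and an inverse QFT. The standard QFT decomposition yields $\mathscr{O}(n_s)$ single-qubit gates and $\mathscr{O}(n_s^2)$ CNOT gates, which matches the $n_s$ and $n_s^2$ summands in the claim. Next, for each $\alpha\in\{x,y,z\}$, the curl block $\Cv_{\alpha_1}$ consists of an outer Clifford conjugation $\tilde U_{ZYX}$ (or its analogues) costing $\mathscr{O}(1)$ single-qubit gates, wrapped around the product $\prod_{j=1}^m W_j$ shown in Figure~\ref{fig:cirUx1}. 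The $j$-th factor requires a $U_{3m+3}^{\Ci_{\alpha,j}}(0)$ dressing --- a cascade of $j$ CNOTs plus $\mathscr{O}(1)$ single-qubit gates --- together with a multi-controlled $\CRZ_{3m+1\to 3m+3}$ with at most $j+1$ controls (cost $\mathscr{O}(j)$ CNOTs, $\mathscr{O}(1)$ single-qubit rotations). Summing $j$ from $1$ to $m$ gives $\mathscr{O}(m)$ single-qubit and $\mathscr{O}(m^2)$ CNOT gates per direction, hence $\mathscr{O}(dm)$ and $\mathscr{O}(dm^2)$ after multiplying by $d=3$. The boundary correction factors $\Cu_{\alpha_2}$ (Figure~\ref{fig:Ux2}) are a constant number of multi-controlled $\CRZ$ gates with at most $m+1$ controls, contributing an additional $\mathscr{O}(dm)$ CNOTs that is absorbed into the $\mathscr{O}(dm^2)$ bound.

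The main work is the accounting for $\Cv_2$ in \eqref{eq:V2}. For each of the $d+1=4$ components $\alpha\in\{x,y,z,\rho\}$, each $l\in\Ci_s$, and each $j\in\Ci_\alpha^l$, the factors $\Cu_{X_p,l_j}^\alpha$ and $\Cu_{X_p,l}^\alpha$ are unfolded via the binary representation of $D_p$ (Figures~\ref{tilde U2jl_gamma12}, \ref{tilde U2l_gamma12}) into $n_p+1$ multi-controlled $\RX$ gates $\Cu_{X,l_j}^\alpha(\theta)^{2^g}$, together with a compensating inverse. By \eqref{eq:Uxlj}, each such gate $\CRX^{\Ci_{\alpha,1}^1,\Ci_{\alpha,1}^0}_{3m+4+n_s}$ has at most $|\Ci_\alpha^1|+|\Ci_{j,3m}^1|+|\Ci_{l,n_s}^1|\leq\mathscr{O}(md+n_s)$ controls, so it costs $\mathscr{O}(md+n_s)$ CNOTs and $\mathscr{O}(1)$ single-qubit gates. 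The $\Cu_{Y_p,l_j}^\alpha$ and $\Cu_{Y_p,l}^\alpha$ factors from \eqref{eq:UY_p} involve only one multi-controlled $\RY$ each. Multiplying the per-gate cost by $(d+1)\cdot|\Ci_s|\cdot|\Ci|\cdot n_p$ yields the claimed contributions $\mathscr{O}((d+1)|\Ci||\Ci_s|n_p)$ single-qubit gates and $\mathscr{O}((d+1)|\Ci||\Ci_s|n_p(n_s+md))$ CNOTs. Adding all four blocks finishes the proof.

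The main obstacle is the third paragraph: one must be careful that the multi-controlled rotations inside $\Cv_2$ inherit their control count from the precise index sets $\Ci_{\alpha,1}^k$, $\Ci_{\alpha,0}^k$ rather than from the full register size, and that the binary-expansion trick replaces a sum of $N_p$ multi-controlled rotations by only $n_p=\log_2 N_p$ of them, since $(\Cu_{X,l_j}^\alpha(\theta))^{2^g}$ is implemented at the same cost as $\Cu_{X,l_j}^\alpha(2^g\theta)$. Granting these two observations, the whole lemma reduces to routine summation.
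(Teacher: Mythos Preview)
Your proposal is correct and follows essentially the same route as the paper's own proof: both decompose $V(\tau)$ via \eqref{eq:unitary} into $\Cu_1$, $\Cv_2$, and the curl blocks $\Cv_{\alpha_1}\Cu_{\alpha_2}$, then count single-qubit and CNOT gates block by block using the $\mathscr{O}(k)$-CNOT realization of a $k$-controlled rotation (the paper invokes the explicit $16j-40$ bound from \cite{Vale24}) together with the binary expansion of $D_p$ to reduce the $N_p$ controlled rotations to $n_p$. The only cosmetic difference is that the paper records exact intermediate counts such as $\Cn_{\text{CNOT}}(\Cu_{X_p,l_j}^{\alpha})=n_p(16(n_s+m(d+1)+1)-24)$ before passing to asymptotics, whereas you work directly at the $\mathscr{O}$-level; your caveat about the control count being governed by the index sets $\Ci_{\alpha,1}^k,\Ci_{\alpha,0}^k$ is exactly right and is implicit in the paper's stated control-point count $n_s+m(d+1)$.
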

 
	\begin{proof}
		Let $\Cn_{\text{S}}(\Cu)$ be the number of single-qubit gates of the operator $\Cu$ and  
		$\Cn_{\text{CNOT}}(\Cu)$ be the number of non-local gates of the operator $\Cu$. It is obvious that
		\begin{equation*}
			\Cn_{\beta}(V(\tau)) = \Cn_{\beta}(\Cu_{1})+\Cn_{\beta}(\Cv_{2})+\sum_{\alpha=x,y,z}\big( \Cn_{\beta}(\Cv_{\alpha_1})+
			\Cn_{\beta}(\Cv_{\alpha_2})\big),
		\end{equation*}
		where $\beta=\{\text{S},\text{CNOT}\}$.
		According to Figure~\ref{tilde U1}, the implementation of $\Cu_1$ just has a maximum of 
		\begin{align}\label{eq:Cu1}
			\Cn_{\text{S}}(\Cu_1)  = \mathscr{O}(n_s+1) 
		\end{align}
		single gates. The non-local gates included in the operator $\Cu_1$ are $\mathscr{O}(n_s^2)$
        CR gates to implement the (inverse) quantum  Fourier transform (QFT, IQFT), which corresponds to  $\mathscr{O}(n_s^2)$ CNOT gates.

        The number of single qubit gates and CNOT gates of $\Cv_2$ satisfies 
        \begin{equation}\label{eq:N_u2_alpha}
            \Cn_{\beta}(\Cv_2) = \sum_{l\in \Ci_s}\sum_{j\in \Ci_{\alpha}^l}\left(
            \Cn_{\beta}(\Cu_{X_p,l_j}^{\alpha})
            +\Cn_{\beta}(\Cu_{Y_p,l_j}^{\alpha})\right)
            +\left(\Cn_{\beta}(\Cu_{X_p,l}^{\alpha})+\Cn_{\beta}(\Cu_{Y_p,l}^{\alpha})\right),
        \end{equation}
        where $\beta\in \{\text{S},\text{CNOT}\}$.
		From Figures  \ref{fig:tilde UX_lj}--\ref{tilde U2l_gamma12}, one can find that the operator  $\Cu_{X_p,l_j}^{\alpha}$ has $n_p$   $\Cu_{X,l_j}^{\alpha}$ which is a multi-controlled RX gate,  each consisting of $n_s+m(d+1)$ control points, and 
        $\Cu_{X_p,l}^{\alpha}$ has $n_p$ multi-controlled RX gates, each of which has $n_s+3$ control points.
        It is known from \cite{Vale24} that the multi-controlled RZ gate with $(j-1)$ control qubits can be decomposed into single-qubit gates and at most $16j-40$ CNOT gates. Therefore, the number of single-qubit and CNOT gates required to implement the operator $\Cu_{X_p,l_j}^{\alpha}$ and $\Cu_{X_p,l}^{\alpha}$ are 
        \begin{align}
            \Cn_{\text{S}}(\Cu_{X_p,l_j}^{\alpha})&=n_p,\quad 
            \Cn_{\text{CNOT}}(\Cu_{X_p,l_j}^{\alpha}) =  n_p(16(n_s+m(d+1)+1)-24),\label{eq:N_Uxp}\\
             \Cn_{\text{S}}(\Cu_{X_p,l}^{\alpha}) &= n_p, \quad 
             \Cn_{\text{CNOT}}(\Cu_{X_p,l}^{\alpha}) = n_p(16(n_s+d+1)-24).
        \end{align}
        From Equation \eqref{eq:UY_p}, one gets
        \begin{align}
            \Cn_{\text{S}}(\Cu_{Y_p,l_j}^{\alpha}) &= 1, \quad 
            \Cn_{\text{CNOT}}(\Cu_{Y_p,l_j}^{\alpha}) = 16(n_s+m(d+1))-24,\\
            \Cn_{\text{S}}(\Cu_{Y_p,l}^{\alpha}) &= 1, \quad 
            \Cn_{\text{CNOT}}(\Cu_{Y_p,l}^{\alpha}) = 16(n_s+d+1)-24. \label{eq:N_Uyp}
        \end{align}
        Inserting Equation \eqref{eq:N_Uxp}--\eqref{eq:N_Uyp} into Equation \eqref{eq:N_u2_alpha}, the number of gates to implement $\Cv_2$ are
        \begin{align}
            \Cn_{\text{S}}(\Cv_2) = \mathscr{O}\left((d+1)|\Ci_s||\Ci|(n_p+1)\right),\quad 
            \Cn_{\text{CNOT}}(\Cv_2) = \mathscr{O}\left((d+1)|\Ci_s||\Ci|(n_pn_s+dn_p m)\right),
        \end{align}
        where $|\Ci|=\max\limits_{\alpha=x,y,z,\rho,l\in \Ci_s}|\Ci_{\alpha}^l|$, $d=3$.

		As shown in Figure~\ref{fig:cirUx1}, the operator $W_j$ consists of  a multi-controlled $\text{RZ}_3$ gate with  $j+1$ control points
		and a total of $2(j-1)$ CNOT gates. 
		Therefore, the number of CNOT gates and single-qubit gates required to implement the approximated $\Cv_{x_1}$
		is 
		\begin{equation}\label{eq:NVx1}
			\Cn_{\text{S}} (\Cv_{x_1})= \mathscr{O}(m),\quad 
   \Cn_{\text{CNOT}}(\Cv_{x_1}) =\mathscr{O}(m^2). \quad 
		\end{equation}
		From Figure \ref{fig:Ux2}, it yields similarly 
		\begin{equation}\label{eq:NUx2}
			\Cn_{\text{S}}(\Cu_{x_2}) = \mathscr{O}(1),\quad 
   \Cn_{\text{CNOT}}(\Cu_{x_2}) = \mathscr{O}(m+1). 
		\end{equation}
		In summary, from \eqref{eq:Cu1}-\eqref{eq:NUx2}, one has 
		\begin{align*}
			\Cn_{S}(V(\tau)) &= \mathscr{O}\left(dm+n_s+(d+1)|\Ci||\Ci_{s}|n_p\right), \\ 
			\Cn_{\text{CNOT}}(V(\tau))& = \mathscr{O}\left(dm^2+n_s^2+(d+1)|\Ci||\Ci_{s}|n_p(n_s+md)\right),
		\end{align*}
  where $d=3$. The proof is finished.
	\end{proof}

 \subsection{Main results}
 
 In this subsection, we present the main result of the complexity of the Schr\"odingerisation for Maxwell's equations. Before that, we give the 
 error estimates of the spectral methods (see for example \cite{Shen11}). 
 
 \begin{lemma}\label{them:error of Pn and In}
 	For $u\in H_p^m(I)$ which consists of 
 	functions with derivatives of order up to $m-1$ being $2\pi L$-periodic,
 	there holds
 	\begin{equation}\label{eq:hat ek}
 		\|u-\Ci_N u\|_{L^2(I)} \lesssim \triangle p^{m}|u|_{H^{m}(I)},
 	\end{equation}
 	where $\Ci_N$ is the discrete Fourier interpolation by 
 	\begin{equation}
 		\Ci_N u(p) = \sum_{k=-N/2}^{N/2} \tilde u_k e^{ik(p/L+\pi)}, \quad 	\tilde u_k = \frac{1}{N c_k} \sum_{j=0}^{N-1}u(p_j)e^{-ik(p_j/L+\pi)},\quad k=-N/2,\cdots,N/2,
 	\end{equation}
 	where $c_k =1$ for $|k|<N/2$, and $c_k = 2$ for $k=\pm N/2$.
 \end{lemma}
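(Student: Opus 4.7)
The plan is to split the interpolation error $u - \Ci_N u$ into a spectral truncation part and an aliasing part, and to bound each using the Fourier characterization of $H_p^m(I)$. Writing the Fourier expansion $u(p) = \sum_{k\in\mathbb{Z}}\hat u_k e^{ik(p/L+\pi)}$, Parseval applied to the $m$-th derivative gives $|u|_{H^m(I)}^2 \asymp \sum_{k\in\mathbb{Z}}(|k|/L)^{2m}|\hat u_k|^2$. Letting $P_N u = \sum_{|k|\le N/2}\hat u_k e^{ik(p/L+\pi)}$ denote the Fourier truncation (with the usual half-weight at $k = \pm N/2$), I would decompose
\begin{equation*}
u - \Ci_N u = (u - P_N u) + (P_N u - \Ci_N u)
\end{equation*}
and treat the two pieces separately.

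For the truncation part, Parseval gives
\begin{equation*}
\|u - P_N u\|_{L^2(I)}^2 = \sum_{|k|>N/2}|\hat u_k|^2 \le \Big(\frac{2L}{N}\Big)^{2m}\sum_{|k|>N/2}\Big(\frac{|k|}{L}\Big)^{2m}|\hat u_k|^2 \lesssim \triangle p^{2m}|u|_{H^m(I)}^2,
\end{equation*}
since $\triangle p = 2\pi L/N$. For the aliasing part, substituting the Fourier series of $u$ into the definition of $\tilde u_k$ and using the discrete orthogonality $N^{-1}\sum_{j=0}^{N-1} e^{i(k'-k)2\pi j/N} = 1$ when $k'\equiv k\pmod N$ and $0$ otherwise, I would derive the aliasing formula $\tilde u_k = \hat u_k + \sum_{l\neq 0}\hat u_{k+lN}$ for $|k|<N/2$ (with a minor adjustment at $k = \pm N/2$ due to $c_{\pm N/2}=2$). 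Parseval then gives
\begin{equation*}
\|P_N u - \Ci_N u\|_{L^2(I)}^2 \lesssim \sum_{|k|\le N/2}\Big|\sum_{l\neq 0}\hat u_{k+lN}\Big|^2,
\end{equation*}
and applying Cauchy--Schwarz in $l$ against the weights $(|k+lN|/L)^{2m}$ absorbs the tail into $|u|_{H^m(I)}^2$, yielding the same $\triangle p^{2m}$ rate. A triangle inequality then combines the two estimates into the claim.

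The main obstacle is the aliasing step: one must verify that after Cauchy--Schwarz the residual sum $\sum_{l\neq 0}(|k+lN|/L)^{-2m}$ is uniformly bounded in $|k|\le N/2$, which forces $m>1/2$ (the standard hypothesis) and requires tracking the precise dependence on $L$ and $N$ so that the constants line up with $\triangle p^{m}$. The handling of the boundary indices $k=\pm N/2$ introduces only factor-of-two constants via the same weight estimate, so it does not change the rate.
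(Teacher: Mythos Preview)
Your argument is the standard and correct proof of this classical spectral approximation estimate. Note, however, that the paper does not supply its own proof of this lemma: it merely states the result and refers to the spectral methods monograph \cite{Shen11} (Shen--Tang--Wang). The truncation-plus-aliasing decomposition you outline, together with the Cauchy--Schwarz estimate on the aliasing tail (requiring $m>1/2$), is exactly the argument found in that reference, so your proposal fills in what the paper leaves as a citation.
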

 
 \begin{lemma}\label{lem:uhd-uh}
 	Assume $L$ and $S$ are large enough such that  $e^{-S\pi +T} \approx e^{-L\pi +T/2} \approx 0$, and
    $g(p)\in C^r(\bbR)$, $\beta(\xi)\in H^r(\bbR)$
    is the $r$-th order approximation to the $\delta$ function. 
      Define $\Bv_h^D(T) = \Cv \Bv_h(0)$ and $\Bu_h^D \ket{k} = e^{p_k}\triangle s\sum\limits_{l\in [N_s]} (I_{n} \otimes \bra{l}\otimes I_{N_p} )M_k \Bv_h^D$, where $\Cv = (I_n\otimes I_{N_s}\otimes \Phi^p) V^{N_t}(\tau)(I_n\otimes I_{N_s}\otimes (\Phi^p)^{\dagger} )$.
 	The relative error holds as follows
 	\begin{equation}
 			\frac{\|\Bu_h^D\ket{k} - \Bu_f \ket{k}\|}{\|\Bu_f(T)\|}  \lesssim \frac{e^{p_k}\|\Bu_f(0)\|}{\|\Bu_f(T)\|} \|\Cv-\Cu\|
 			+\triangle p^r + \triangle s^r,
 		\end{equation}
 		where $\Cu = (I_n \otimes I_{N_s} \otimes \Phi^p ) U  (I_n \otimes I_{N_s} \otimes (\Phi^p)^{\dagger} )$, and $p_k\geq T/2$.
 \end{lemma}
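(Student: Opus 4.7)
My plan is to split the total error into two contributions via the triangle inequality: the circuit/Trotter approximation error (replacing the exact unitary $\Cu$ by $\Cv$) and the discretization error (combining Fourier spectral accuracy in $p$ with the $r$-th order delta approximation in $s$). First I would introduce the intermediate ``ideal'' recovery
\[
\Bu_h\ket{k} \;=\; e^{p_k}\triangle s \sum_{l\in[N_s]}\bigl(I_n\otimes \bra{l}\otimes I_{N_p}\bigr) M_k \Cu\, \Bv_h(0),
\]
so that
\[
\|\Bu_h^D\ket{k} - \Bu_f\ket{k}\| \;\leq\; \|\Bu_h^D\ket{k} - \Bu_h\ket{k}\| \;+\; \|\Bu_h\ket{k} - \Bu_f\ket{k}\|.
\]
The first piece isolates the Trotter error; the second isolates the spatial discretization error.

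For the first piece, the difference is $e^{p_k}\triangle s \sum_l (I_n\otimes \bra{l}\otimes I_{N_p}) M_k (\Cv-\Cu)\Bv_h(0)$. The recovery operator $\triangle s\sum_l(I\otimes\bra{l}\otimes I)M_k$ has operator norm bounded by $\sqrt{\triangle s \cdot 2\pi S}$, since $\sum_l\bra{l}$ has norm $\sqrt{N_s}$. On the other hand, from $\Bv_h(0)=\Bu_f(0)\otimes\deltabf_h(0)\otimes\Bg_h$ together with the definition of $\delta_h$ via the compactly supported scaling of $\beta$, one has $\|\deltabf_h(0)\|=\mathscr{O}(1/\sqrt{\triangle s})$ and $\|\Bg_h\|=\mathscr{O}(1)$. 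The two $\sqrt{\triangle s}$ factors cancel, yielding
\[
\|\Bu_h^D\ket{k}-\Bu_h\ket{k}\| \;\lesssim\; e^{p_k}\|\Cv-\Cu\|\,\|\Bu_f(0)\|,
\]
which after division by $\|\Bu_f(T)\|$ matches the first term in the target bound.

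For the second piece I would decompose it further through a ``semi-discrete'' reference: evaluate the exact solution $\Bv(T,s,p)$ of the continuous autonomous system \eqref{eq:v(s)}, then integrate in $s$ continuously and apply the warped recovery of Theorem~\ref{thm:recovery u} at $p=p_k$. Comparing this reference with $\Bu_f\ket{k}$ uses only the exact reconstruction identity \eqref{eq:tilde whd} together with Theorem~\ref{thm:recovery u}; comparing $\Bu_h\ket{k}$ with it splits into (a) replacing the $s$-integral by the trapezoidal/midpoint quadrature weighted by $\delta_h$, giving $\mathscr{O}(\triangle s^r)$ via the $r$-th order quadrature property \eqref{eq:rth order delta}; and (b) replacing the continuous Fourier $p$-evolution by the spectral scheme determined by $P_p$, giving $\mathscr{O}(\triangle p^r)$ via Lemma~\ref{them:error of Pn and In}. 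The truncation assumptions $e^{-\pi L+T/2}\approx 0$ and $e^{-\pi S+T}\approx 0$ ensure that the non-periodic tails contribute only negligible (exponentially small) terms, so the smoothness of $g\in C^r$ and of $\beta\in H^r$ actually delivers the asserted $r$-th order rates.

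The main obstacle will be item (a): verifying that at the final time $T$ the discrete $\delta_h$-weighted sum still realizes the $r$-th order reconstruction \eqref{eq:rth order delta}. This requires that $\Bv(T,\cdot,p)$ remain in $C^r$ uniformly, even though its initial condition is $\delta(s)\Bw_h(0)$ (a distribution). The key observation is that the $s$-direction of \eqref{eq:v(s)} is pure transport $\partial_t+\partial_s$ dressed by the bounded spatial Hamiltonian $-iH_p(s)$, so by the method of characteristics $\Bv(T,s,p)$ is essentially $\delta(s-T)$ convolved with a smooth multiplier; after replacing $\delta$ by its $r$-th order smoothing $\beta$, the resulting $\Bv_h(T,\cdot)$ inherits $H^r$ smoothness, and \eqref{eq:rth order delta} applies. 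Combining the two independent error terms completes the bound.
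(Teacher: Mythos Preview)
Your overall decomposition matches the paper's: split via the triangle inequality into a Trotter/circuit term and a discretization term, and handle the latter by combining Lemma~\ref{them:error of Pn and In} (spectral accuracy in $p$) with the $r$-th order delta-quadrature~\eqref{eq:rth order delta} (accuracy in $s$). That part of your plan is fine and essentially what the paper does.

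The gap is in your treatment of the first (circuit) term. Your cancellation argument rests on two norm estimates that are both off by a power of the mesh size. First, since $\delta_h(s)=\triangle s^{-1}\beta(s/\triangle s)$ with $\beta$ compactly supported on $[-q,q]$, only $\mathscr{O}(1)$ entries of $\deltabf_h(0)$ are nonzero and each is $\mathscr{O}(1/\triangle s)$, so $\|\deltabf_h(0)\|=\mathscr{O}(1/\triangle s)$, not $\mathscr{O}(1/\sqrt{\triangle s})$. Second, $\|\Bg_h\|^2=\sum_k |g(p_k)|^2\approx \triangle p^{-1}\int|g|^2$, which gives $\|\Bg_h\|=\mathscr{O}(1/\sqrt{\triangle p})$, not $\mathscr{O}(1)$. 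With the correct scalings your product $\|R_k\|\,\|\Cv-\Cu\|\,\|\Bv_h(0)\|$ carries an extra factor $1/\sqrt{\triangle s\,\triangle p}$ and the bound you claim does not follow.

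The paper does not bound this term by multiplying operator norms. Instead it exploits the tensor structure of the initial data: applying the recovery map directly to $\Bv_h(0)=\Bu_f(0)\otimes\deltabf_h(0)\otimes\Bg_h$ gives $\Bu_f(0)\cdot\bigl(\triangle s\sum_l\delta_h(s_l)\bigr)\cdot g(p_k)\ket{k}$, and $\triangle s\sum_l\delta_h(s_l)\approx 1$ by the moment condition on $\beta$, so $\|R_k\Bv_h(0)\|=\mathscr{O}(\|\Bu_f(0)\|)$ with no mesh-dependent blow-up. You should redo the first term along these lines rather than via separate norm bounds; note, however, that even the paper's passage from $\|R_k\Bv_h(0)\|$ to $\|R_k(\Cv-\Cu)\Bv_h(0)\|$ is stated rather informally, so if you want a fully rigorous argument you will need to say more about why the sandwiched $\Cv-\Cu$ does not destroy this structure.
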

 
 \begin{proof}
Since $e^{-\pi L+T/2} \approx 0 $ and $g(p)\in C^r(\bbR)$ , the error bounded by the spectral method is obtained from Lemma \ref{them:error of Pn and In} such that 
 	\begin{equation}\label{eq:err of wh -w}
 		\frac{\|\mathbf{M}_k\Bw_h - e^{-p_k}\Bu_f\ket{k}\|}{\|\Bw(t,p_k)\|} \leq  \mathscr{O}(\triangle p^r),
 	\end{equation}
 	where  $\mathbf{M}_k = I_n \otimes \ket{k}\bra{k}$ and we use 
 	$\Bw(t,p_k) = e^{-p_k}\Bu$ for $p_k \geq T/2$ from the recovery Theorem~\ref{thm:recovery u}.
 	
 Since $\Bv$  satisfies the  transport equation in the $s$ direction, and $\beta(\xi)\in H^r(\bbR)$  is the rth-order approximation to the $\delta$ function, one gets the error between the semi-discrete solution  $\Bw_h^D$ in Equation \eqref{eq:whd} and $ \Bw_h$ in 
 Equation \eqref{eq:v(s)}-\eqref{eq:tilde whd} from Lemma~\ref{them:error of Pn and In}
 \begin{equation}\label{eq:err of whd -wh}
 	\frac{\|\mathbf{M}_k(\Bw_h^D -\Bw_h)\|}{\|\Bw(t,p_k)\|} \leq \mathscr{O}(\triangle s^r ),
 \end{equation}
 under the assumption $e^{-\pi S+ T}\approx 0$. Here we used the relation $\|\mathbf{M}_k\Bw_h\| =  (1+\mathscr{O}(\triangle p^r))\| \Bw(t,p_k)\|$ from Equation \eqref{eq:err of wh -w}.
 According to Equation\eqref{eq:err of wh -w}-\eqref{eq:err of whd -wh}, one has the relative error as follows
\begin{equation*}
	\begin{aligned}
		&\frac{\|\Bu_h^D\ket{k}-\Bu_f\ket{k}\|}{\|\Bu_f(T)\|} = \frac{\|e^{p_k}\triangle s\sum\limits_{l\in [N_s]} (I_{n} \otimes \bra{l}\otimes I_{N_p} )M_k \Cv \Bv_h(0) - \Bu_f \ket{k}\|}{\|\Bu_f(T)\|}\\
		\vspace{1mm}
		\leq & \frac{e^{p_k}}{\|\Bu_f(T)\|}\big\|
		 \triangle s\sum\limits_{l\in [N_s]} (I_{n} \otimes \bra{l}\otimes I_{N_p} )M_k(\Cv-\Cu)\Bv_h(0)\big\|\\
		& + \frac{1}{\|\Bw(T,p_k)\|}\big \| 
		 \triangle s\sum\limits_{l\in [N_s]} (I_{n} \otimes \bra{l}\otimes I_{N_p} )M_k\Cu \Bv_h(0)
		- \mathbf{M}_k\Bw_h
		\big\| 
		+ \big\|
		\frac{e^{p_k}}{\|\Bu_f(T)\|} \Bw_h(t,p_k) -\ket{\Bu_f} \ket{k}
		\big\| \\ 
		\lesssim & \frac{e^{p_k} \|\Bu_f(0)\|}{\|\Bu_f(T)\|}\|\Cv - \Cu\| + \triangle p^r + \triangle s^r.
	\end{aligned}
\end{equation*}
Here we have used  $\Bw(t,p_k) = e^{-p_k} \Bu_f(t)$ and $\|\triangle s \sum_{l\in [N_s]} (I_{n} \otimes \bra{l}\otimes I_{N_p}) M_k \Bv_h(0)\| = \mathscr{O}(\|\Bu_f(0)\|)$.
\end{proof}
    
	\begin{theorem}\label{thm:computation}
	Given Maxwell's equations with time dependent source term,  then $\Bu_f(t)$, the solution of Equation \eqref{eq:ODE1} with a mesh size $\triangle x = \triangle y = \triangle z = 1/M$, $m = \log_2 M$,
	can be prepared with  precision $\varepsilon$ using the Schr\"odingerisation method depicted in Figure~\ref{schr_circuit}. 
	Assume the assumptions in Lemma~\ref{lem:uhd-uh} and Assumption~\ref{ass:J} hold,
	this preparation can be achieved using at most 
	$$\tilde{\mathscr{O}}\left( 
				 \frac{T^2e^{\frac{3}{2}} \|\Bu_f(0)\|^3}{\varepsilon\|\Bu_f(T)\|^3}
            \big( M^2d^2m(|\Ci||\Ci_s|+m)+\frac{(dF_{\max}|\Ci||\Ci_s|)^2}{\sqrt[r]{\varepsilon}}\big)
	\right)$$ single-qubit gates and
	$$\tilde{\mathscr{O}}\left(
	       \frac{T^2e^{\frac{3}{2}} \|\Bu_f(0)\|^3}{\varepsilon\|\Bu_f(T)\|^3} 
        \big( M^2d^2m^2(d|\Ci||\Ci_s|+m)+\frac{md(dF_{\max}|\Ci||\Ci_s|)^2}{\sqrt[r]{\varepsilon}}\big)
	\right)$$ CNOT gates.
\end{theorem}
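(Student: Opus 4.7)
The plan is to assemble this complexity bound from three ingredients already in hand: the error decomposition of Lemma~\ref{lem:uhd-uh} (which reduces everything to a spectral part and the operator-norm error $\|\Cv-\Cu\|$), the per-step Trotter bound of Lemma~\ref{lem:Utau-Vtau}, and the per-step gate count of Lemma~\ref{lem:U-V}. The total cost is $N_t$ times the per-step gate count, so the main task is to determine how small the time step $\tau$, and hence how large $N_t=T/\tau$, must be in order to guarantee total precision $\varepsilon$ after the spectral parameters $\triangle p,\triangle s$ have been fixed.

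First, from Lemma~\ref{lem:uhd-uh} the relative recovery error splits as
\begin{equation*}
    \frac{\|\Bu_h^D\ket{k}-\Bu_f\ket{k}\|}{\|\Bu_f(T)\|}
    \;\lesssim\; \frac{e^{p_k}\|\Bu_f(0)\|}{\|\Bu_f(T)\|}\,\|\Cv-\Cu\| \;+\; \triangle p^{r}+\triangle s^{r}.
\end{equation*}
I would budget a fraction of $\varepsilon$ to each of the three terms. The two spectral contributions force $\triangle p,\triangle s = \mathscr{O}(\varepsilon^{1/r})$, which via $N_p=2\pi L/\triangle p$ and $N_s=2\pi S/\triangle s$ translates into $n_p,n_s=\mathscr{O}(\log(1/\triangle p))=\tilde{\mathscr{O}}(\log(1/\varepsilon))$ ancilla qubits. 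Because every explicit per-step gate count in Lemma~\ref{lem:U-V} depends only \emph{logarithmically} on these $n_p,n_s$, the blow-up from $r$-th order approximation to the delta and from the warped phase is the announced $\log\log(1/\varepsilon)$-type factor.

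Second, for the Trotter contribution I would use sub-multiplicativity $\|\Cv-\Cu\|\le N_t\,\|V(\tau)-U(\tau)\|$ together with Lemma~\ref{lem:Utau-Vtau}:
\begin{equation*}
    \|\Cv-\Cu\| \;\lesssim\; \frac{T}{\tau}\,\tau^{2}\Bigl(\tfrac{dm}{\triangle x^{2}}+\tfrac{F_{\max}}{\triangle p\triangle x}+\tfrac{F_{\max}}{\triangle p\triangle s}+\tfrac{1}{\triangle s\triangle x}+\tfrac{(d{+}1)|\Ci||\Ci_s|F_{\max}^{2}}{\triangle p}\Bigr).
\end{equation*}
Setting this $\lesssim \varepsilon\|\Bu_f(T)\|/(e^{p_k}\|\Bu_f(0)\|)$ and substituting $\triangle x=1/M$, $\triangle p\sim\triangle s\sim\varepsilon^{1/r}$, the dominant terms in the parenthesis are $M^{2}dm$ and $(d{+}1)|\Ci||\Ci_s|F_{\max}^{2}/\varepsilon^{1/r}$, so
\begin{equation*}
    N_t \;=\; \mathscr{O}\!\left(\frac{T^{2}e^{p_k}\|\Bu_f(0)\|}{\varepsilon\|\Bu_f(T)\|}\Bigl(M^{2}dm+\tfrac{(d{+}1)|\Ci||\Ci_s|F_{\max}^{2}}{\varepsilon^{1/r}}\Bigr)\right).
\end{equation*}
Finally I would multiply this $N_t$ by the per-step counts of Lemma~\ref{lem:U-V}, namely $\Cn_{\mathrm{S}}=\mathscr{O}(dm+n_s+(d{+}1)|\Ci||\Ci_s|n_p)$ and $\Cn_{\mathrm{CNOT}}=\mathscr{O}(dm^{2}+n_s^{2}+(d{+}1)|\Ci||\Ci_s|n_p(n_s+md))$, absorbing the $n_p,n_s$ polylogarithmic factors into the $\tilde{\mathscr{O}}$. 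Together with amplitude amplification to compensate the $\mathscr{O}(\|\Bu_f(T)\|/\|\Bu_f(0)\|)$ success probability of the post-selection $\ket{k}$ (contributing an extra $\|\Bu_f(0)\|/\|\Bu_f(T)\|$ factor, hence the cube in the final formula), this reproduces the stated single-qubit and CNOT counts.

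The main obstacle is bookkeeping rather than conceptual: one must (i) verify that the five Trotter-error terms in Lemma~\ref{lem:Utau-Vtau} are correctly dominated by the two retained in the final bound under the chosen scalings of $\triangle p,\triangle s,\triangle x$, and (ii) confirm that $n_p,n_s=\tilde{\mathscr{O}}(1)$ can indeed be hidden inside the $\tilde{\mathscr{O}}$ without affecting the dependence on $M,d,|\Ci|,|\Ci_s|,F_{\max},\varepsilon$. A secondary subtlety is choosing $p_k$ just above $T/2$ so that $e^{p_k}$ is bounded by a modest constant (the $e^{3/2}$ factor appearing in the statement), while still satisfying the hypothesis $p_k>T/2$ of Theorem~\ref{thm:recovery u}; this is what makes the exponential prefactor harmless.
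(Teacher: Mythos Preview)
Your plan is essentially the same as the paper's proof: split the error via Lemma~\ref{lem:uhd-uh}, fix $\triangle p\sim\triangle s\sim\varepsilon^{1/r}$, bound $\|\Cv-\Cu\|\le N_t\|U(\tau)-V(\tau)\|$ through Lemma~\ref{lem:Utau-Vtau} to determine $N_t$, multiply by the per-step gate counts of Lemma~\ref{lem:U-V}, and finally account for the post-selection overhead with $p_k=T/2$.

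The one genuine slip is in the last step. The paper does \emph{not} use amplitude amplification; it uses naive repeated measurement. The success amplitude of projecting onto $\ket{k}$ is $\sim e^{-p_k}\|\Bu_f(T)\|/\|\Bu_f(0)\|$, so the success \emph{probability} is $\sim e^{-2p_k}\|\Bu_f(T)\|^2/\|\Bu_f(0)\|^2$, and the number of repetitions is its inverse, $\mathscr{O}\!\big(e^{2p_k}\|\Bu_f(0)\|^2/\|\Bu_f(T)\|^2\big)$. Together with the single factor $e^{p_k}\|\Bu_f(0)\|/\|\Bu_f(T)\|$ already present in $N_t$, this yields the cube $\|\Bu_f(0)\|^3/\|\Bu_f(T)\|^3$ and the prefactor $e^{3p_k}=e^{3T/2}$ in the statement. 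If you actually used amplitude amplification as you wrote, the overhead would be only $\mathscr{O}\!\big(e^{p_k}\|\Bu_f(0)\|/\|\Bu_f(T)\|\big)$, giving a \emph{square} rather than a cube---a better bound than the one stated, but not the one you are asked to prove. Your parenthetical ``contributing an extra $\|\Bu_f(0)\|/\|\Bu_f(T)\|$ factor'' is the correct overhead for amplitude amplification applied to the \emph{squared} probability, but you mislabelled the probability as linear; in any case, to reproduce the theorem exactly you should invoke plain measurement repetition, not amplification.
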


		\begin{proof}
			We divide the time into equal intervals, with the time step defined as $\tau = T/N_t$. 
			For the simulation accuracy to be within $\varepsilon$,
			according to Lemma~\ref{lem:uhd-uh}, 
            one needs to choose the number of simulation steps $N_t$  and mesh size $\triangle p$,
            $\triangle s$ to satisfy the following conditions
			\begin{align} 
			&	\|\Cu - \Cv\| =\|U^{N_t}(\tau) - V^{N_t}(\tau)\| \leq N_t\|U(\tau) - V(\tau)\|\leq  \frac{e^{-p_k}\|\Bu_f(T)\|}{\|\Bu_f(0)\|}\varepsilon, \label{eq:Cu-Cv}\\
			&	\triangle p \sim \triangle s \sim \sqrt[r]{\varepsilon}. \label{eq:triangle p, s}
			\end{align}
		    Combined with Lemma~\ref{lem:Utau-Vtau}, $N_t$ should be large enough such that
			\begin{equation*}
			N_t \geq \frac{T^2 e^{p_k} \|\Bu_f(0)\|}{\varepsilon \|\Bu_f(T)\|} \bigg(
			\frac{d(m-1)}{\triangle x^2} + \frac{F_{\max}}{\sqrt[r]{\varepsilon}\triangle x} + \frac{F_{\max}}{\sqrt[r]{\varepsilon^2}}
			+\frac{1}{\sqrt[r]{\varepsilon} \triangle x} +\frac{F_{\max}^2 |\Ci||\Ci_s| d}{\sqrt[r]{\varepsilon}}
			\bigg).			
			\end{equation*}
		    From Equation \eqref{eq:triangle p, s}, one has 
		\begin{align}
			N_p  &= \frac{2\pi L}{\triangle p} = \mathscr{O}(\frac{\ln(\frac{1}{\varepsilon})}{\sqrt[r]{\varepsilon}}), \quad n_p = \log_2 N_p = \mathscr{O} (\log (\frac{\ln(\frac{1}{\varepsilon})}{\sqrt[r]{\varepsilon}}) ),\\
			N_s  &= \frac{2\pi S}{\triangle s} = \mathscr{O}(\frac{\ln(\frac{1}{\varepsilon})
            }{\sqrt[r]{\varepsilon}}),\quad n_s = \log_2 N_s = \mathscr{O}(\log (\frac{\ln(\frac{1}{\varepsilon})}{\sqrt[r]{\varepsilon}})),
		\end{align}
	where we have used the assumption $e^{-\pi S+T} \approx e^{-\pi L + \frac{T}{2}} \approx 0 \leq \varepsilon$.
	Therefore, the numbers of single-qubit gates and CNOT gates to implement $V^{N_t}(\tau)$ are, respectively,  
			\begin{align*}
				&\Cn_{\text{S}} ( V^{N_t}(\tau)) 
				= \mathscr{O} (N_t \Cn_{\text{S}}(V(\tau)))\\
				 \leq &\mathscr{O}\bigg(\frac{T^2e^{p_k} \|\Bu_f(0)\|}{\varepsilon\|\Bu_f(T)\|}
                  \big( d|\Ci| |\Ci_s|\log(\frac{\ln(\frac{1}{\varepsilon})}{\sqrt[r]{\varepsilon}}) +md\big) 
				 \cdot
				  \big(M^2 dm + \frac{F_{\max}(M+dF_{\max}|\Ci_s||\Ci|)}{\sqrt[r]{\varepsilon}}
				 \big)\bigg)\\
				 = &\tilde{\mathscr{O}}\bigg( 
				 \frac{T^2e^{p_k} \|\Bu_f(0)\|}{\varepsilon\|\Bu_f(T)\|}
                   \big( M^2d^2m(|\Ci||\Ci_s|+m)+\frac{(dF_{\max}|\Ci||\Ci_s|)^2}{\sqrt[r]{\varepsilon}}\big)
                   \bigg),\\
				&\Cn_{\text{CNOT}} (V^{N_t}(\tau)) =
				 \mathscr{O}(N_t \Cn_{\text{CNOT}}(V(\tau)))  \\
			    \leq &\tilde{\mathscr{O}}\bigg(
	           \frac{T^2e^{p_k} \|\Bu_f(0)\|}{\varepsilon \|\Bu_f(T)\|} 
              \big( M^2d^2m^2(d|\Ci||\Ci_s|+m)+\frac{md(dF_{\max}|\Ci||\Ci_s|)^2}{\sqrt[r]{\varepsilon}}\big)
	           \bigg).
 			\end{align*}
 			Next, we consider the probability of getting the desired state 
 			\begin{equation*}
 				M_k \ket{\Bv_h^D} \approx \frac{e^{-p_k}\Bu_f(T)\ket{k}\otimes \deltabf_h }{\|\Bv_h^D(T)\|} \approx \frac{e^{-p_k} \|\Bu_f(T)\|}{\|\Bu_f(0)\|} \ket{\Bu_f(T)}\ket{k}\ket{\deltabf_h(T)}.
 			\end{equation*}
 			The first approximation is from Lemma~\ref{lem:uhd-uh} and Equation \eqref{eq:Cu-Cv}.
 			Therefore, $\mathscr{O}(\frac{e^{2p_k}\|\Bu_f(0)\|^2}{\|\Bu_f(T)\|^2})$ measurements are needed.
 			The proof is finished by choosing $p_k = T/2$.
		\end{proof}
  
 \begin{remark}\label{eq:probalility of uf}
	From \eqref{eq:matrix FDTD}, by applying  Duhamel's principle, 
	one gets 
	\begin{equation*}
		\Bu_T = \Bu(T) = e^{AT}\Bu_0 + \int_0^T e^{A(T-s)}\Bf(s)\;ds.
	\end{equation*}
	Since $A$ is skew Hermitian, one has 
    \begin{equation*}
        \|\Bu_T\|^2 = \|\Bu_0\|^2 + 2 \Bu_0^{\top}\tilde \Bf + \|\tilde \Bf\|^2,
    \end{equation*}
    where $\tilde \Bf = \int_0^T e^{-As}\Bf(s)ds$ and $\|\tilde \Bf\|\leq \|\int_0^T\Bf(s)ds\|$.
    Consequently, it follows that  $\|\Bu_T\|/\|\Bu_0\| = \mathscr{O}(1)$ for any fixed $T=\mathscr{O}(1)$,
     provided that the source terms satisfy  $\BJ,\rho\in L^{\infty}(0,T;L^2(\Omega))$.
	Considering $\Bu_f(T) = \Bu(T) + \Br$, it follows 
	\begin{equation*}
		\begin{aligned}
		\frac{\|\Bu_f(0)\|^2}{\|\Bu_f(T)\|^2}& 
       = \frac{\triangle x^d\|\Bu_0\|^2+8c_0^2}{\triangle x^d \|\Bu_T\|^2+8c_0^2}
	    \approx \frac{\|\BE_0\|^2_{L^2(\Omega)} + \|\BB_0\|_{L^2(\Omega)}^2 + 8c_0^2}{\mathscr{O}(\|\BE_0\|^2_{L^2(\Omega)} + \|\BB_0\|_{L^2(\Omega)}) + 8c_0^2}	
		\end{aligned}
	\end{equation*}
    Here we used the numerical integration $\triangle x^d 
    \|\Bu_0\|^2 \approx \|\BE(0)\|_{L^2(\Omega)}^2+\|\BB(0)\|_{L^2(\Omega)}^2$. Noting that $c_0 = \max\{\max\limits_{t\in (0,T)}\|\Bf\|_{l^{\infty}},1\} \leq \max\{\max\limits_{t\in (0,T)} \BJ, \max\limits_{t\in (0,T)} \rho, 1\} = \mathscr{O}(1)$, one has $\|\Bu_f(0)\|/\|\Bu_f(T)\| = \mathscr{O}(1)$.
	Therefore, 
    introducing an auxiliary variable $\Br$ to obtain a homogeneous equation~\eqref{eq:inhomo linear Eq} does not impose a significant additional computational burden.
\end{remark}

   \subsubsection{Comparison with the classical algorithm (FDTD)}
   To achieve an approximate solution with accuracy  $\varepsilon$ using a second-order spatial discretization, 
	we choose $M = 1/\sqrt{\varepsilon}$ and  $m\sim
	\log(1/\sqrt{\varepsilon})$.
	According to Theorem~\ref{thm:computation}, the computation cost for single-qubit and CNOT gates is at most
	\begin{equation*}
		\tilde{\mathscr{O}}\left( 
		\frac{T^2e^{3T/2} \|\Bu_f(0)\|^3 d^3}{\|\Bu_f(T)\|^3} 
		\left(\frac{1}{\varepsilon^2}\left(\log\left(\frac{1}{\sqrt{\varepsilon.}}\right)\right) + \frac{F_{\max}^2}{\varepsilon^{1/r+1}}\right)
		\right) = \tilde{\mathscr{O}}\left(d^3\varepsilon^{-2}\log(\frac{1}{\sqrt[r]{\varepsilon}})\right),\quad r\geq 2.
	\end{equation*}
	In comparison,  classical algorithms, such as the explicit Finite-Difference Time-Domain (FDTD) method, incur a computational cost of $\mathscr{O}(\varepsilon^{-1-\frac{d}{2}})$. Our quantum algorithms, derived via Schr\"odingerization, demonstrate a modest advantage in computational complexity when $d=3$.

    To further enhance the advantages of quantum computing, we can employ a higher-order Trotter-Suzuki formula \cite{CMNR18}
    for $\Ch = \sum_{l=1}^k \gamma_l \Ch_l$ such that 
    \begin{equation}
        \left(\prod_{j=1}^k \exp(-i \frac{ \gamma_j \Ch_j T}{2N_t}) \prod_{j=k}^1 \exp(-i \frac{\gamma_j   \Ch_jT}{2N_t})\right)^{N_t} = \exp(-i \Ch T) + \mathscr{O}(\frac{(k h T)^3}{N_t^2}),
    \end{equation}
    where $h=\max_j \|\Ch_j\|$. 
    For the Hamiltonian   $H$
    defined in \eqref{eq:H}--\eqref{eq:Hcurl}, with $H_F$ and $H_{\text{curl}}$ reformulated in Equation~\ref{eq:H_F} and Equation \eqref{eq:Hcurl_HxHyHz}, respectively, 
    achieving an approximation with precision $\varepsilon$ requires the time step $N_t$ to satisfy
    \begin{equation}
        N_t  =\mathscr{O}\left(\frac{1}{\sqrt{\varepsilon}}(\frac{md+n_s+d|\Ci||\Ci_s|n_p}{\triangle x})^{\frac{3}{2}}\right).
    \end{equation}
    Since the number of gates required to implement the circuit is doubled, the computational cost for single-qubit and CNOT gates remains at most 
    \begin{equation}
        \tilde{\mathscr{O}}(\varepsilon^{-\frac{5}{4}} \log(\frac{1}{\sqrt[r]{\varepsilon}})),
    \end{equation}
    when $d=\mathscr{O}(1)$. This represents a significant acceleration compared to classical algorithms. 
\section{Numerical tests}
For the numerical experiments, we utilize a classical computer to simulate the Hamiltonian system, verifying the practicality of the algorithms above, particularly focusing on the high-order accuracy of the scheme and recovery from Schr\"odingerisation. To simplify the presentation,  we restrict ourselves to a reduced version of the Maxwell equations with one spatial variable, $x$,  namely 
\begin{equation*}
\begin{aligned}
    &\partial_t E_y + \partial_x B_z = -J_y,\quad 
    \partial_t B_z + \partial_x E_y = 0,\quad \text{in}\;[0,2],\\
    &\qquad \quad \quad  E_y(0) = 0,\qquad E_y(2) = 0.
    \end{aligned}
\end{equation*}
	In order to test the accuracy of the algorithm, we set the exact solution as
	\begin{equation}
        E_y = \sin(\pi(x+t))-\sin(\pi t),\quad
		B_z = -\sin(\pi(x+t)).
	\end{equation}

In this numerical test, we truncate the $p$ region to $(-4\pi,4\pi)$ and the $s$ region to $(-5,5)$. The simulation stops at $T=1/2$. 
We examine the convergence rates of Schr\"odingerization concerning $p$ and $s$ variables with fixed $\triangle x = 1/2^4$.
The result of the above simulation  is shown in Figure~\ref{fig:dis_EB}, which shows that the numerical solutions from Schr$\ddot{\text{o}}$dingerisation  are in agreement with the exact solutions.

As shown in Table~\ref{tab:err of recovery conti}, a second-order convergence is achieved owing to the smoothness of 
$g(p)\in C^2(\bbR)$, $\beta(\xi)\in H^2(\bbR)$, and the utilization of second-order temporal discretization schemes.
  \begin{table}[htbp]
	        \centering
	      \begin{tabular}{lcccccc}
		\midrule [2pt]
		$(\triangle p,\triangle s,\triangle t)$          & $(\frac{8\pi}{2^5},\frac{10}{2^5}, \frac{1}{2^6})$ & order & $(\frac{8\pi}{2^6},\frac{10}{2^6}, \frac{1}{2^7})$ & order & $(\frac{8\pi}{2^7},\frac{10}{2^7}, \frac{1}{2^8})$ & order\\ 
		\hline
		$\|E_{\text{schr}}-E_h\|_{l^{\infty}}$       
        &4.5819e-01  &-  &1.0865e-01  &2.07 &1.5440e-02 &2.81\\ \hline
		$\|B_{\text{schr}}-B_h\|_{l^{\infty}}$ 
        &4.2349e-01  &- &1.0732e-01  &1.98 &9.7667e-03 &3.45\\
		\midrule [2pt]
	\end{tabular}            
	\caption{
		The convergence rates of  $\|E_{\text{schr}}-E_h\|_{l^{\infty}}$ and $\|B_{\text{schr}}-B_h\|_{l^{\infty}}$, respectively,  where
       $E_h$ and $B_h$ are the solutions to Equation \eqref{eq:semi-Eh}--\eqref{eq:semi-Bh}, 
       $E_{\text{schr}}$ and $B_{\text{schr}}$ are the recovery from the Schr\"odingerisation according to Equation \eqref{eq:recovery uh} with 
       $k:=\min_j\{j:p_j>0.5\}$.
	}\label{tab:err of recovery conti}
\end{table}
  \begin{figure}[t]
	\centering
	\subfigure[$Ey$]{
		\includegraphics[width=0.35\linewidth]{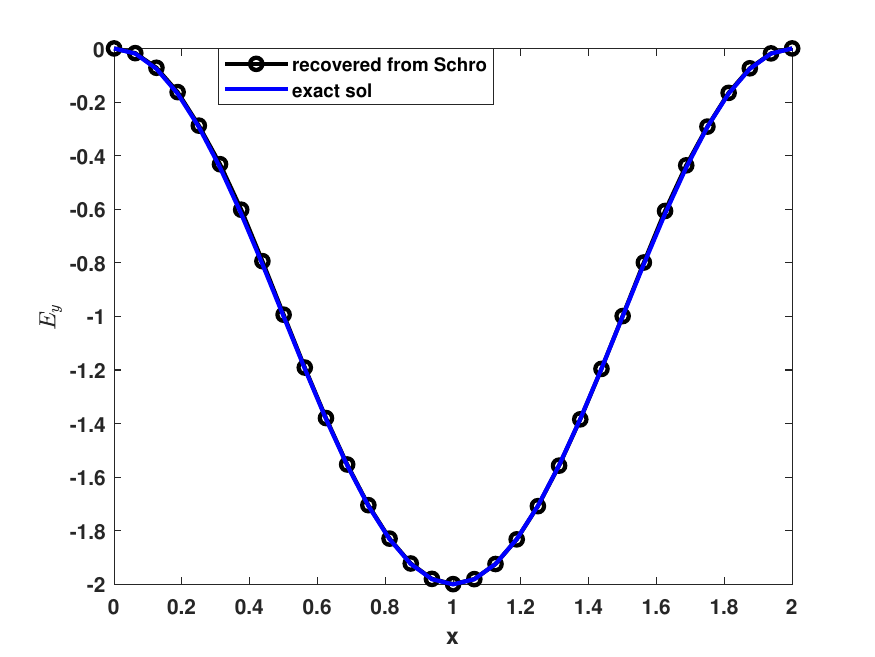}\label{fig:Ey}}
	\subfigure[$B_z$]{
		\includegraphics[width=0.35\linewidth]{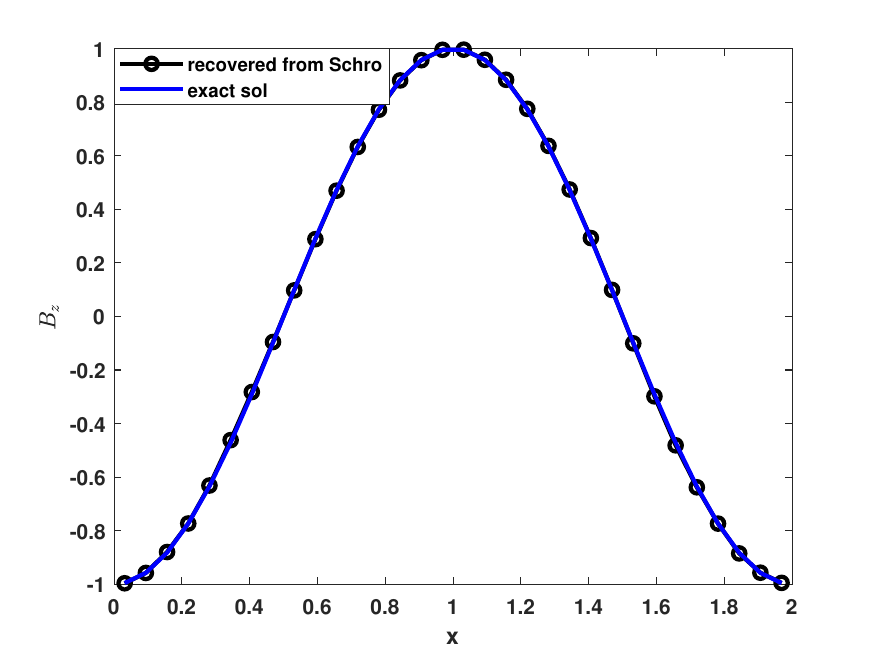}\label{fig:Bz}}
	\caption{Simulation of electromagnetic fields by Schr\"odingerisation at $T=1/2$ with the truncated domain restriced to $p\in (-4\pi,4\pi)$, $s\in (-5,5)$, and mesh size $\triangle x = \frac{1}{2^4}$, $\triangle s = \frac{10}{2^{7}}$, $\triangle p = \frac{8\pi}{2^7}$.}
	\label{fig:dis_EB}
\end{figure} 
\section{Conclusions}

The Schr\"odingerisation method, in conjunction with the autonomousization approach in \cite{cjL23}, transforms time-varying linear partial and ordinary differential equations with non-unitary dynamics into time-independent Schr\"odinger-type equations. This transformation is achieved through a warped phase transformation that maps the equations into two higher dimensions.
In this paper, we present a detailed implementation and accordingly the detailed computational complexity analysis of quantum circuits for the Schr\"odingerisation of Maxwell's equations under  Perfect Electric Conductor (PEC) boundary conditions and time-varying source terms. Through the application of smooth extension in the initial data of the extended space, and high-order approximations to the delta function, the rise in dimensionality due to the transformation does not increase the computational load of quantum computations, only $\mathscr{O}(\log \log (1/\varepsilon))$ with 
$\varepsilon$ the desired precision. 
In addition, transforming source-driven ODE systems into homogeneous systems by introducing auxiliary variables with the stretch transformation \cite{JLM24} does not diminish the success probability of obtaining the target states. 

\section*{Acknowledgements}
SJ, NL and LZ are supported by NSFC grant No. 12341104, the Shanghai Jiao Tong University 2030 Initiative and the Fundamental Research Funds for the Central Universities. SJ was also partially supported by the Shanghai Municipal Science and Technology Major Project (2021SHZDZX0102). NL also acknowledges funding from the Science and Technology Program of Shanghai, China (21JC1402900). LZ is also partially supported by the Shanghai Municipal Science and Technology Project 22JC1401600. CM was partially supported by China Postdoctoral Science Foundation (No. 2023M732248) and Postdoctoral Innovative Talents
Support Program (No. BX20230219).

\end{document}